\newcommand{\subsumes}[0]{\mathrel{\ooalign{\hss$\subseteq$\hss\cr\kern0.7ex\raise0.45ex\hbox{\scalebox{0.65}{$\sigma$}}}}}
\def\orcidID#1{\href{http://orcid.org/#1}{\raisebox{-1.25pt}{\includegraphics{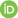}}}}
\newcommand{\vampire}{\mbox{\textsc{Vampire}}}
\newcommand{\eprover}{\textsc{E}}
\newcommand{\spass}{\mbox{\textsc{Spass}}}
\newcommand{\setheo}{\mbox{\textsc{Setheo}}}
\newcommand{\leancop}{\mbox{\textsf{leanCoP}}}
\newcommand{\cadical}{\mbox{\textsc{CaDiCaL}}}
\newcommand{\zzz}{Z3}
\begin{document}
\newcommand{\titleText}{Spanning Matrices via Satisfiability Solving}
\title{\texorpdfstring{\titleText}{\titleText}}
\author{Clemens Eisenhofer\orcidID{0000-0003-0339-1580} \and
Michael Rawson\orcidID{0000-0001-7834-1567} \and
Laura Kov\'{a}cs\orcidID{0000-0002-8299-2714}}
\authorrunning{Eisenhofer et al.}
%
\institute{TU Wien, Austria\\
\email{\{clemens.eisenhofer,michael.rawson,laura.kovacs\}@tuwien.ac.at}}
\maketitle              
\begin{abstract}
We propose a new encoding of the first-order connection method as a Boolean satisfiability problem.
The encoding eschews tree-like presentations of the connection method in favour of matrices, as we show that tree-like calculi have a number of drawbacks in the context of satisfiability solving.
The matrix setting permits numerous global refinements of the basic connection calculus.
We also show that a suitably-refined calculus is a decision procedure for the Bernays-Sch\"onfinkel class.

\keywords{first-order logic \and automated theorem proving \and connection calculus \and Boolean satisfiability \and satisfiability modulo theories}
\end{abstract}

\section{Introduction}
Search strategies employed by \emph{automated theorem provers for first-order logics} can be divided into two broad classes~\cite{taxonomy}: \emph{ordering-based} and \emph{subgoal-reduction}. 
The first class, which contains saturation-based systems including \vampire~\cite{Vampire}, \eprover~\cite{eprover}, and \spass~\cite{spass}, work by continuously deducing new facts from an existing set of formulas.
The second class, containing systems such as \setheo~\cite{SETHEO} or \leancop~\cite{leanCoP}, work by manipulating a partial proof, backtracking as necessary.

The subgoal-reduction class has the disadvantage that redundant search space may be explored in duplicate unless care is taken to ``remember'' where one has been before.
Avoiding such cases by \emph{global} refinement is a subject of great interest among proponents of the subgoal-reduction approach to theorem proving~\cite{comparison-of-proof-methods}.
In general, such refinements can contain non-trivial propositional structure, such as the information ``if clauses $C$ and $D$ are in the current proof attempt, and the current substitution binds at least $x \mapsto t$ and $y \mapsto s$, we are in a dead-end and should backtrack''.

Backtracking mechanisms are routinely implemented in Boolean satisfiability (SAT) solvers~\cite{DBLP:conf/sat/BiereFW23,ipasir-up}.
Modern SAT solvers \emph{learn} relevant information as they go, and the most recent iterations even allow users to add constraints during the solver's search for a model, in response to the solver's current assignment.
When a solver cannot find a satisfying assignment, they can offer an \emph{explanation} in the form of an unsat core.
These features make satisfiability solvers an ideal vehicle for managing global information and thereby guiding proof search.

\emph{Here we are interested in the integration of SAT and subgoal-reduction}, focusing on Bibel's \emph{connection method}~\cite{Bibel}.
We directly encode search for connection proofs as a Boolean satisfiability problem, allowing the solver to dictate search decisions and responding by asserting constraints, such that when a satisfying assignment is reached, it represents a complete proof.
This approach can be applied to connection \emph{tableaux} (Section~\ref{sec:tableau}), but with some unfortunate properties, which motivates our encoding of the connection calculus in matrix form (Section~\ref{sec:matrix}).
Unsat cores are used to guide iterative deepening (Section~\ref{sec:unsatCore}), and furthermore the encoding allows many global refinements of the calculus that are usually not feasible within ordinary methods (Section~\ref{sec:optimisations}).

Our approach intends for the SAT solver to return a \emph{satisfying} assignment of our constraints, where the model represents a finished proof: matrix or tableau.
This contrasts with most other uses of SAT solvers in theorem proving in which ground \emph{unsatisfiability} is the aim~\cite{AVATAR}, often witnessing Herbrand-style refutation by instantiation of first-order clauses.

\section{Preliminaries}
\label{sec:preliminaries}
We use the standard syntax and semantics of classical first-order logic~\cite{smullyan}.
Logical objects such as terms $t$ may be indexed: $t^i_j$.
We assume for the sake of a wider audience that the input problem has been negated and converted to conjunctive normal form (CNF) by a satisfiability-preserving transformation~\cite{handbook-ar-nf}, although the initial negation and CNF transformation are not strictly necessary~\cite{Bibel}.

\subsection{Satisfiability Solving}
We assume familiarity with Boolean satisfiability (SAT) solving~\cite{handbook-of-satisfiability} and satisfiability modulo theories (SMT)~\cite{DBLP:conf/jelia/Tinelli02}.
In addition to the basic decision procedure for Boolean formulas, many SAT solvers support solving \emph{under assumptions} and \emph{unsatisfiable cores}.
Solving under assumptions allows fixing some literals temporarily for the duration of a solving run: afterwards, the solver ``forgets'' them and their consequences.
If the solver detects that the problem is unsatisfiable under assumptions, it may extract a subset of the assumptions that were used to derive inconsistency: the so-called ``unsat core''.
Cores may not be \emph{minimal}, so inconsistency can be derived with a strict subset of the core.
Minimal cores can be generated at additional computational cost~\cite{minimal-unsat-cores-smt,minimal-unsat-cores}.

Some SAT and SMT solvers, including \cadical~\cite{ipasir-up} and \zzz~\cite{user-propagation}, allow the user to intervene \emph{during} search by a variety of means, often under the slogan ``user propagation''.
Such mechanisms allow employing a solver for tackling a broad class of problems efficiently.
For our purposes, we assume we can be notified when a SAT variable is assigned true or false, and respond by asserting additional constraints, potentially containing fresh SAT variables.
We write $J_1, \ldots, J_n \Vdash F$ to represent that we added (\emph{propagated}) the constraint $J_1 \wedge \ldots \wedge J_n \Rightarrow F$ to the solver given that the solver's current model satisfies all antecedents $J_1, \ldots, J_n$.
This feature allows us to avoid eagerly generating a very large set of all possible constraints and add only those parts of the encoding that are currently relevant. This kind of lazy generation is very desirable in our case.

\subsection{Connection Tableaux}
\label{sec:prelim_tableaux}
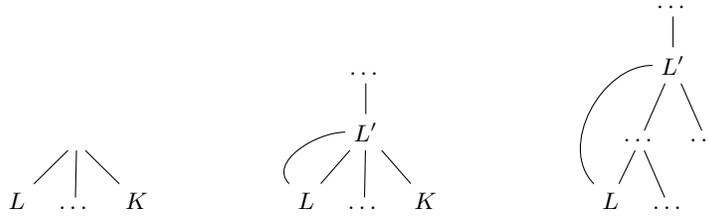
\begin{figure}[t]
    \centering
    \begin{forest}
        [,
        [$L$], [\ldots], [$K$]
        ]
    \end{forest}
    \hspace{.5in}
    \begin{forest}
        [\ldots,
        [$L'$, name=root
            [$L$, name=mate]
            [\ldots]
            [$K$]
        ]
        ]
        \draw (mate) to[out=north west, in=west] (root);
    \end{forest}
    \hspace{.5in}
    \begin{forest}
        [\ldots,
        [$L'$, name=ancestor
        [\ldots,
        [$L$, name=leaf], [\ldots]
        ], [\ldots]
        ]
        ]
        \draw (leaf) to[out=north west, in=west] (ancestor);
    \end{forest}
    \caption{Connection tableau rules, left-to-right: \emph{start}, \emph{extension}, and \emph{reduction}. In \emph{start} and \emph{extension}, $L \vee \ldots \vee K$ is a freshly-renamed copy of a clause from the input problem. In \emph{extension} and \emph{reduction}, $L$ is connected to $L'$ using $\sigma$.}
    \label{fig:tableau-rules}
\end{figure}
Connection tableaux are essentially clausal tableaux~\cite{tableauxHandbook} with the additional constraint that each clause added to a branch must have at least one literal \emph{connected} to the current leaf literal~\cite{handbook-ar-model-elimination}. Two literals are connected if they have the same atom but opposite polarity: they are dual.
Recall that in the first-order case, clauses in the tableau have their variables renamed apart from any other and a global substitution $\sigma$ is applied to the entire tableau in order to connect literals.
The connection tableaux calculus is not \emph{confluent} and therefore requires both backtracking and a fair enumeration of tableaux to retain completeness.

We say that two literals $L, K$ \emph{can} be connected and write $L \bowtie K$ if there exists some substitution $\rho$ such that $\rho(L)$ is connected to $\rho(K)$.
A subset of input clauses are considered potential roots of the tableau~\cite{handbook-ar-model-elimination}: we assume these \emph{start} clauses have been chosen appropriately.
Equality is not handled by the basic connection calculus, and it is either axiomatised~\cite{handbook-ar-paramodulation} or preprocessed away by some variation of Brand's modification~\cite{brand}.
We sometimes write $C^k$ to distinguish the $k$\textsuperscript{th} copy of the input clause $C$, indexing its variables $x^k$.

There are conventionally three operations manipulating connection tableaux, shown in Figure~\ref{fig:tableau-rules}.
\emph{Start} operations pick a start clause and add it at the root of the tableau.
\emph{Extension} operations add a clause below a leaf literal, connecting some literal in the clause with the leaf.
\emph{Reduction} operations connect a leaf literal with another literal on the path from the literal toward the tableau's root.
In general, all these operations must be backtracked over to achieve completeness, but the choice of leaf literal does not matter.

\subsection{The Connection Method, Matrices, and Spanning Connections}
Connection tableaux are closely related
to, or are an instance of, the connection method~\cite{Bibel}.
While connection calculi are a rich topic with many facets, we are primarily interested in the \emph{matrix}\footnote{readers may be familiar with other kinds of matrices: they are unrelated} representation~\cite{matings-in-matrices}: here, we consider matrices
in \emph{normal form} and therefore define a matrix to be a set of clauses.
As in Section~\ref{sec:prelim_tableaux}, clauses in a matrix are renamed apart, and a global substitution is applied.
For simplicity, we explicitly copy input clauses into the matrix and therefore present explicit rather than implicit amplification~\cite{matings-in-matrices}.
A \emph{path} through a matrix is a set containing exactly one literal from each clause in the matrix. A path is \emph{open} in case it does not contain at least one connected pair of literals, otherwise the path is \emph{closed}.
A matrix proof is \emph{finished} -- we have a \emph{spanning set of connections} -- when there does not exist an open path.

Further, a matrix is \emph{fully connected} with respect to a set of connections if each literal in the matrix is connected to at least one other literal of a different clause~\cite{using-matings-for-pruning}.
A matrix $M$ is \emph{minimal} if there is no proof using only a strict subset of $M$. 
Although a start clause must be in the matrix, there is no inherent tree structure in matrices, unlike connection tableaux.
To illustrate the two representations, consider the unsatisfiable set
\[
\begin{matrix}
\forall x\forall y.~&\lnot P(x)~ \vee \lnot &P(f(y))\\
\forall z.~&P(z)~ \vee &P(f(z))
\end{matrix}
\]
and compare matrix and tableau refutations thereof in Figure~\ref{fig:tabvsmat}. Figure~\ref{fig:openPath} shows a fully connected matrix that is not a proof because there is an open path.
\begin{figure}[t]
\begin{minipage}{.44\textwidth}
\begin{tikzpicture}[main/.style = {draw, circle}] 
\node (C31) at (0, 0) {$P(z^1)$}; 
\node (C32) at (0, -1) {$P(f(z^1))$}; 

\node (C11) at (2, 0) {$\lnot P(x^1)$}; 
\node (C12) at (2, -1) {$\lnot P(f(y^1))$}; 

\node (C21) at (4, 0) {$\lnot P(x^2)$}; 
\node (C22) at (4, -1) {$\lnot P(f(y^2))$}; 

\path (C11) edge[bend right=20] (C31);
\path (C12) edge[bend right=20] (C31);

\path (C21) edge[bend left=40] (C32);
\path (C22) edge[bend left=40] (C32);
\end{tikzpicture}
\end{minipage}
\begin{minipage}{.44\textwidth}
\centering
\begin{forest}
[
 [$P(z^1)$, name=pz
  [$\lnot P(x^1)$, name=px1]
  [$\lnot P(f(y^1))$, name=pfy1]
 ]
 [$P(f(z^1))$, name=pfz
  [$\lnot P(x^2)$, name=px2]
  [$\lnot P(f(y^2))$, name=pfy2]
 ]
]
\draw (px1) to[out=120, in=west] (pz);
\draw (pfy1) to[out=north, in=east] (pz);
\draw (px2) to[out=120, in=west] (pfz);
\draw (pfy2) to[out=north, in=east] (pfz);
\end{forest}
\end{minipage}
\caption{Matrix versus tableau proofs. Clauses are written vertically in matrices. Curved lines indicate connections. $\sigma$ is computed such that e.g. $\sigma(z^1) = f(y^1)$.}
\label{fig:tabvsmat}
\end{figure}
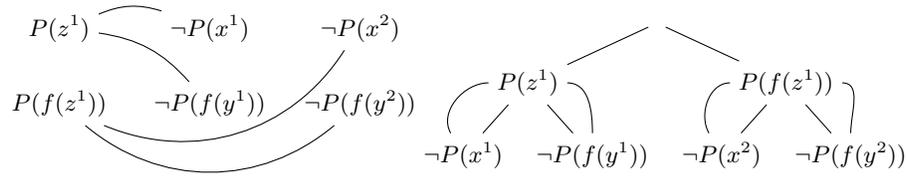
\begin{figure}
\begin{center}
\begin{minipage}{.47\textwidth}
\begin{tikzpicture}[main/.style = {draw, circle}] 
\node (C31) at (0, 0) {$P(z^1)$}; 
\node (C32) at (0, -1) {$\bm{P(f(z^1))}$}; 

\node (C11) at (2, 0) {$\lnot P(x^1)$}; 
\node (C12) at (2, -1) {$\bm{\lnot P(f(y^1))}$}; 

\node (C21) at (4, 0) {$\bm{\lnot P(x^2)}$}; 
\node (C22) at (4, -1) {$\lnot P(f(y^2))$}; 

\path (C11) edge[bend right=20] (C31);
\path (C12) edge[bend right=20] (C31);

\path (C21) edge[bend right=30] (C31);
\path (C22) edge[bend left=30] (C32);
\end{tikzpicture} 
\end{minipage}
\end{center}
\caption{Unfinished matrix proof. An open path is shown in \textbf{bold}.}
\label{fig:openPath}
\end{figure}
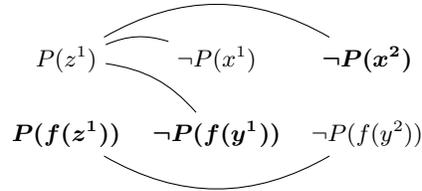

\section{Encoding Connection Tableaux}
\label{sec:tableau}
We first encode the search for closed connection tableaux in a SAT solver. 
A closed connection tableau is  explicitly constructed from a satisfying assignment.
We encode that a literal $L$ is part of the tableau at path $U$ using a SAT variable $\langle L; U\rangle$.
$L$ and $U$ have no inherent meaning to the solver and are used only to determine the corresponding variable. For example, $\lnot P(x^2)$ in Figure~\ref{fig:tabvsmat} is represented by a variable $\langle \lnot P(x^2); \{ P(f(z^1)) \}\rangle$. 
The substitution $\sigma$ and unification of connected literals 
are handled with another family of variables we discuss later. 

\paragraph{\bf Connection tableau rules as SAT.} We begin by asserting that at least one $S$ of the start clauses must be present in the tableau. Therefore, all literals $L \in S$ must be in the tableau at the root:
\begin{equation}
\label{eq:start}
\bigvee_S \bigwedge_{L \in S} \langle L; \emptyset \rangle
\end{equation}
The SAT solver is free to choose any start clause $S$, but all literals in the chosen  $S$ must be present at the root of the tableau.
As the solver assigns variables $\langle L; U \rangle$ true we respond by propagating additional requirements.
We  demand that each literal has either an \emph{extension} $E_{C, K}$ or a \emph{reduction} $R_{K}$ applied, in order to close the corresponding branch in the final tableau:
\begin{equation}
\label{eq:tableauxEncoding}
\langle L; U \rangle \Vdash \bigvee_{C, K} E_{C, K} \vee \bigvee_{K \in U} R_{K}
\end{equation}
Each formula $E_{C, K}$ represents applying an extension operation at $L$ using a fresh copy of a clause $C$ containing a literal $K \bowtie L$, which yields
\begin{equation}
\label{eq:ext}
E_{C, K} := \left[\langle L \sim K\rangle \wedge \bigwedge_{\substack{K' \in C\\K' \neq K}} \langle K'; \{ L \} \cup U \rangle\right]
\end{equation}
i.e. that if an extension step $E_{C, K}$ is taken, $L$ and $K$ are connected and other literals $K' \in C$ must be in the tableau with path $\{ L \} \cup U$.
We write $\langle L \sim K \rangle$ for the SAT variable representing that $L$ and $K$ are connected modulo $\sigma$. Similarly,
\begin{equation}
\label{eq:red}
R_K := \langle L \sim K\rangle
\end{equation}
where $K \bowtie L$ is on the path $U$.
The possible steps $E_{C, K}$ and $R_{K}$ are computed based only on the \emph{possible} connection relation $\bowtie$: the current substitution $\sigma$ is ignored, as it may change with solver decisions elsewhere in the tableau.
Iterative deepening may be applied as usual~\cite{leanCoP}, perhaps by offering no $E_{C, K}$ alternatives if the path length $|U|$ exceeds a depth limit.
For example, when the solver decides that $\lnot P(x^2)$ is present in Figure~\ref{fig:tabvsmat} we propagate $\langle \lnot P(x^2); \{ P(f(z^1)) \}\rangle \Vdash$

\[ \begin{array}{lllrrrrr}
    &[ &\langle P(f(z^2)); \{ P(f(z^1)), \lnot P(x^2) \} \rangle& &\wedge& &\langle\lnot P(x^2) \sim P(z^2)\rangle &]\vee \vphantom{A}\\
    &[ &\langle P(z^2); \{ P(f(z^1)), \lnot P(x^2) \} \rangle& &\wedge& &\langle\lnot P(x^2) \sim P(f(z^2))\rangle &]\vee \vphantom{A}\\
    &&&&&&\langle\lnot P(x^2) \sim P(f(z^1))\rangle
\end{array} \]

\paragraph{{\bf Unification Constraints.}} 
Variables $\langle L \sim K\rangle$ constrain $\sigma$ such that $\sigma(L)$ is connected $\sigma(K)$.
When the SAT solver assigns such a variable, we  check whether this is consistent with the existing set of constraints.
This can be done by applying a unification algorithm, perhaps using an efficient data structure such as the \emph{variable trail}~\cite{handbook-ar-model-elimination} to handle backtracking.
We note in passing that algebraic datatype solvers~\cite{datatypes} implement a similar decision procedure.
If the constraints are not satisfiable, we produce a \emph{conflict clause} containing the reasons as Boolean assignments.
For example, if we have $\langle L \sim K\rangle$, $\langle J \sim K\rangle$ and $\langle L \sim J\rangle$, but $\langle L \sim K\rangle \wedge \langle L \sim J\rangle$ is already unsatisfiable, we add the conflict
\begin{equation}
\lnot\langle L \sim K\rangle \vee \lnot\langle L \sim J\rangle
\end{equation}
causing the solver to backtrack.
This approach also allows a uniform treatment of refinements such as \emph{regularity} based on \emph{disequation constraints}~\cite{handbook-ar-model-elimination}.

\paragraph{\bf SAT Encoding of Closed Connection Tableaux.} We now have all the ingredients for our SAT encoding, which we denote by $\mathcal{E}_T$. By asserting that (i) a start clause must be present~\eqref{eq:start}, (ii) each literal in the tableau must have a reduction or extension rule applied to it~\eqref{eq:tableauxEncoding} and (iii) connections must have a consistent unifier, enforced by unification constraints, our encoding $\mathcal{E}_T$ is complete. Each propositional model of $\mathcal{E}_T$ represents a closed connection tableau.

\paragraph{\bf Pathological Behaviour.}
Our SAT encoding $\mathcal{E}_T$, while simple, has severe drawbacks. 
The most important is that extension adds a \emph{fresh instance} from the clause set to the tableau and so the number of different SAT variables $\langle L; U \rangle$ grows rapidly.
In turn, this means the resulting SAT problem has only limited propositional structure between variables that the solver can exploit.
Search tends to degrade towards the kind of exhaustive enumeration that a system such as \leancop{}~\cite{leanCoP} implements, but with the added overhead of a SAT solver.

\section{Encoding Matrices}
\label{sec:matrix}
To avoid the problems of $\mathcal{E}_T$, we encode matrix proofs. We denote our matrix-based encoding $\mathcal{E}_M$. 
Most search routines for spanning set of connection presented in literature~\cite{Bibel,leanCoP} restrict connections such that the matrix form simulates one or more connection tableaux, but this is not strictly necessary~\cite{matrix-based-constructive}.
In our $\mathcal{E}_M$ encoding, we allow arbitrary connections between clauses present in the matrix.
A single proof in the matrix representation can correspond to numerous proofs in the tableau form~\cite{using-matings-for-pruning}.
In any event, our new representation $\mathcal{E}_M$ produces a combinatorial problem of finding connections between a set of clauses, which we argue is much more suitable for SAT solvers than $\mathcal{E}_T$.

\subsection{Encoding Overview}
We find a matrix with a given \emph{resource limit} and span it in two steps:
\begin{enumerate}
    \item We encode constraints for a fully-connected matrix (Section~\ref{sec:fully-connected}).
    \item We constrain that the result has a set of spanning connections (Section~\ref{sec:check}).
\end{enumerate}
We use the following result to motivate our encoding.
\begin{theorem}[Fully Connected Matrix]
\label{thm:connected}
Suppose $M$ is minimal and has a spanning set of connections. Then $M$ is fully connected.
\end{theorem}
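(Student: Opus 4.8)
The plan is to argue by contradiction against minimality: assuming $M$ is minimal with a spanning set of connections $\mathcal{C}$ but is \emph{not} fully connected, I construct a strictly smaller submatrix that is still spanned, contradicting minimality. Since full connectedness fails, fix a literal $L$ in some clause $C \in M$ that is joined by no connection of $\mathcal{C}$ to a literal of a different clause. The candidate witness is $M' := M \setminus \{C\}$, equipped with the restriction $\mathcal{C}'$ of $\mathcal{C}$ to connections between clauses of $M'$.

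The heart of the argument is a path-extension step. Let $p'$ be an arbitrary path through $M'$, containing one literal from each clause of $M'$. Adjoining $L$ gives $p := p' \cup \{L\}$, a path through $M$ since it now contains exactly one literal from every clause. Because $\mathcal{C}$ is spanning, $p$ is closed, so it contains a connection $(A,B) \in \mathcal{C}$ with $A, B \in p$; as $A$ and $B$ are drawn from the path, they lie in distinct clauses. If either $A$ or $B$ were $L$, this would be a connection joining $L$ to a literal of a different clause, contradicting the choice of $L$. Hence $A, B \in p'$ and $(A,B) \in \mathcal{C}'$, so $p'$ is closed. As $p'$ was arbitrary, $M'$ has no open path and is spanned by $\mathcal{C}'$.

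It remains to check that $M'$ is a legitimate strict, nonempty subset of $M$. No matrix of fewer than two clauses can be spanned, because each of its paths has at most one literal and so cannot contain a connection; thus the existence of $\mathcal{C}$ forces $|M| \geq 2$ and $M' = M \setminus \{C\} \neq \emptyset$. Then $M'$ is a spanned proper submatrix of $M$, contradicting the minimality of $M$, and the theorem follows.

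The main obstacle is definitional bookkeeping rather than combinatorial depth: one must use throughout that connections appearing in closed paths always relate literals of distinct clauses (so that a path, which picks one literal per clause, can carry them), and invoke the unconnectedness of $L$ at exactly the right point to guarantee that the closing connection of $p$ survives deletion of the clause $C$. The short degeneracy remark is also needed so that the constructed submatrix is genuinely nonempty.
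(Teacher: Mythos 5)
Your proof is correct and takes essentially the same route as the paper's: both delete the clause $C$ containing the unconnected literal $L$ and exploit the fact that adjoining $L$ to a path through $M' = M \setminus \{C\}$ changes neither openness nor closedness, since no connection involves $L$. The only difference is direction---the paper uses minimality to obtain an open path of $M'$ and extends it to an open path of $M$ (contradicting spanning), whereas you use spanning of $M$ to show every path of $M'$ is closed (contradicting minimality)---which is just the contrapositive of the same key step, with your explicit $|M| \geq 2$ check a small extra piece of rigor.
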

\begin{proof}
First note that this is similar but not quite identical to Proposition~1 in Letz's work on matings pruning~\cite{using-matings-for-pruning}.
Suppose towards contradiction, there is a literal $L \in C \in M$ that is not connected to any other $K$.
Now consider the rest of the matrix $M' = M \setminus {C}$.
Since $M$ is minimal, there is an open path $U$ through $M'$, otherwise we could span $M'$.
Therefore, $U \cup \{L\}$ is an open path for $M$.
\qed
\end{proof}
Theorem~\ref{thm:connected} allows us to  restrict our work to fully-connected matrices. This restriction is a good approximation, as few fully-connected matrices are not spanning.
In the following, we use SAT variables of the form $S_C$ to denote that clause $C$ appears in the matrix, sometimes superscripted $S_C^k$ to indicate selecting $C^k$, the $k^{th}$ copy of $C$. We call these $S_C$ \emph{selectors} and call $C$ \emph{selected} if $S_C$ is assigned true.
At least one of the start clauses $C$ must be selected, cf.~\eqref{eq:start}: 
\begin{equation}\label{eq:start:matrix}
\bigvee S^1_C. 
\end{equation}
In Section~\ref{sec:tableau} we apply iterative deepening on the maximum length of a branch. This kind of resource limit cannot be applied here, where there is no obvious notion of \emph{branch}, so we must come up with alternatives. We first apply iterative deepening on the number of clause $d$ in the matrix. We can see immediately that we need only introduce at most $d$ selectors for each clause. As we always refer to these copies, the solver can more easily learn propositional structure than with $\mathcal{E}_T$. We discuss a further enhanced encoding later in Section~\ref{sec:unsatCore}.

\subsection{Fully Connected Matrices}
\label{sec:fully-connected}
By Theorem~\ref{thm:connected} we may constrain that each literal in the matrix must connect to at least one other literal.
Similarly to \eqref{eq:tableauxEncoding}, we respond to a selection $S_C$ by propagating that each literal must be connected to some other literal in another clause in the matrix by enumerating all possible connections.
This other clause could be selected  or require selection, but there is no distinction between extension and reduction.
Suppose $C$ is selected. For each $L \in C$, we propagate
\begin{equation}
\label{eq:matrixConnect}
    S_C \Vdash~\bigvee_D~\bigvee_{1 \le k \le d}\bigvee_{K \in D^k} S^k_D \wedge \langle L \sim K\rangle
\end{equation}
where $K \bowtie L$ is a literal in the input clause $D$ we connect to, and $k$ indicates which copy $D^k$ of that clause is used.

To enforce that there are at most $d$ clauses selected for the matrix, there are several possible options.
We suggest using pseudo-Boolean constraints~\cite{DBLP:journals/tcad/ChaiK05} or a direct encoding~\cite{DBLP:conf/cp/BailleuxB03,handbook-of-satisfiability,DBLP:conf/cp/Sinz05} to constrain that ``there are no more than $d$ selector variables assigned''.
We can strengthen this to \emph{exactly} $d$ as we apply iterative deepening, so the less-than-$d$ case was encountered already.

\subsection{Spanning Sets of Connections}
\label{sec:check}
Once we have a fully connected matrix, we check for open paths.
If there are none, we are done and can use the resulting SAT model to output a proof consisting of the matrix and the spanning set of connections.
Suppose instead there is an open path $U$ through the matrix $M$.
At least two literals along $U$ must connect in order to span $M$.
Let $\bar{S}$ be the set of selectors assigned true.
Propagating
\begin{equation}
\label{eq:openpath1}
\bar{S} \Vdash \bigvee_{\{L, K\} \subseteq U} \langle L \sim K\rangle
\end{equation}
forces the solver to ``fix'' $M$, likely via backtracking, by requiring that $U$ is not an open path.

\subsection{Correctness and Complexity of Matrix Encodings}
Our encoding $\mathcal{E}_M$ consists of~\eqref{eq:start:matrix},~\eqref{eq:matrixConnect},~\eqref{eq:openpath1}, and constraints for the depth limit.
It models search for a matrix with a spanning set of connections.
We show soundness, completeness, and termination for a given size $d$  in $\mathcal{E}_M$, and describe the respective complexity class of $\mathcal{E}_M$.

\begin{theorem}[Soundness]
\label{thm:sound}
A propositional model of $\mathcal{E}_M$ represents a matrix with a spanning set of connections.
\end{theorem}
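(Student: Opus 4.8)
The plan is to take an arbitrary propositional model $\mathfrak{M}$ of $\mathcal{E}_M$ and read off from it the three ingredients of a matrix proof: the matrix $M$, a global substitution $\sigma$, and a spanning set of connections $\mathcal{C}$. First I would define $M$ to be the set of clause copies $C^k$ whose selector $S^k_C$ is assigned true in $\mathfrak{M}$. Since all copies are renamed apart by construction, $M$ is a well-formed matrix, and by the start constraint~\eqref{eq:start:matrix} at least one start clause is selected, so $M$ is non-empty and admits a start clause. The depth-limit constraints and the full-connectivity constraints~\eqref{eq:matrixConnect} are satisfied by $\mathfrak{M}$ as well, so in fact $M$ is fully connected and of size at most $d$; however, only the spanning property is needed for the statement.

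Next I would recover $\sigma$ from the unification constraints. Let $\mathcal{C} = \{\{L, K\} : \langle L \sim K\rangle \text{ is assigned true in } \mathfrak{M}\}$ be the candidate connection set. Because the unification procedure adds a conflict clause whenever the currently asserted $\langle L \sim K\rangle$ variables are not simultaneously unifiable, the fact that $\mathfrak{M}$ is a model means no such conflict fires; hence $\mathcal{C}$ is simultaneously unifiable and there is a substitution $\sigma$ with $\sigma(L)$ dual to $\sigma(K)$ for every $\{L, K\} \in \mathcal{C}$. Thus each reported connection is a genuine connection modulo $\sigma$, and $\mathcal{C}$ is a legitimate set of connections for $M$.

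It remains to show $\mathcal{C}$ spans $M$, and this is where~\eqref{eq:openpath1} does the work. Suppose for contradiction that some path $U$ through $M$ is open, i.e. $U$ contains exactly one literal from each selected clause and no two of its literals are connected. Writing $\bar{S}$ for the selectors true in $\mathfrak{M}$, the open-path check of Section~\ref{sec:check} propagates $\bar{S} \Vdash \bigvee_{\{L, K\} \subseteq U} \langle L \sim K\rangle$; but openness of $U$ means every disjunct $\langle L \sim K\rangle$ with $\{L, K\} \subseteq U$ is false while every antecedent in $\bar{S}$ is true, contradicting that $\mathfrak{M}$ satisfies this propagated clause. Hence no open path exists, so $\mathcal{C}$ is spanning and $\mathfrak{M}$ represents the matrix $M$ together with the spanning set $\mathcal{C}$ under $\sigma$.

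The main obstacle is being precise about what ``a propositional model of $\mathcal{E}_M$'' means, given that~\eqref{eq:openpath1} is generated lazily during search: the open-path clauses are asserted only when the spanning check discovers an open path. I would address this by appealing to the operational reading of the user-propagation interface of Section~\ref{sec:preliminaries} --- the solver reports a satisfying assignment only once the spanning check finds no open path, so a \emph{final} model is by definition one for which every open-path clause that the check could generate is already satisfied, and indeed no further such clause is produced. Making this coincidence between ``final model'' and ``the spanning check passed'' rigorous, rather than the constraint-by-constraint reasoning above, is the delicate part; the remaining steps are routine once the connections are known to be simultaneously unifiable and path-closing.
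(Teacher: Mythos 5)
Your proposal is correct and ultimately rests on the same argument as the paper, whose entire proof is the operational observation you make in your final paragraph: a model is only accepted as final after the spanning check of Section~\ref{sec:check} passes, so by construction it represents a matrix with a spanning set of connections. Your additional detail (reading off $M$ from the selectors, recovering $\sigma$ from the unification conflicts, and the contradiction via~\eqref{eq:openpath1}) is a sound elaboration of what the paper leaves implicit, and you correctly identify that the contradiction argument alone would be circular for lazily generated clauses, resolving it exactly as the paper does.
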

\begin{proof}
Whenever the SAT solver finds a propositional model, we first check that it represents a proof, adding constraints if not (Section~\ref{sec:check}).
\qed
\end{proof}

\begin{theorem}[Completeness]
\label{thm:complete}
\label{thm:rectangular}
If a matrix $M$ together with a spanning set of connections exists, there is a propositional model of $\mathcal{E}_M$ at depth $d = |M|$.
\end{theorem}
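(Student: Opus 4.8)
The plan is to construct an explicit satisfying assignment of $\mathcal{E}_M$ directly from the witnessing matrix proof, and then to verify constraint-by-constraint that nothing in the encoding rules it out. Because~\eqref{eq:matrixConnect} demands full connectivity, I would first arrange that $M$ is \emph{minimal}: any spanning matrix contains a minimal spanning sub-matrix, so without loss of generality I take $M$ to be such a minimal matrix and set $d = |M|$. Theorem~\ref{thm:connected} then hands me that $M$ is fully connected, which is exactly the property~\eqref{eq:matrixConnect} needs.

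Next I would read off the assignment. Fixing the copy indices of the clauses of $M$ so that one of the start clauses appears as its first copy, I set the corresponding selectors $S^k_C$ to true (all other selectors false), and set the connection variables $\langle L \sim K\rangle$ to true for exactly the pairs in the given spanning set. The spanning set comes equipped with a global substitution $\sigma$ simultaneously connecting all these pairs, so the asserted $\langle L \sim K\rangle$ are jointly unifiable; hence the unification decision procedure raises no conflict clause excluding the assignment.

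It then remains to check each family of constraints against this assignment. Constraint~\eqref{eq:start:matrix} holds since a start clause is present as copy $1$; the depth constraint holds since exactly $|M| = d$ selectors are true. For~\eqref{eq:matrixConnect}, full connectivity guarantees that every literal $L$ of every selected clause connects to some literal $K$ of another selected clause, so the disjunct $S^k_D \wedge \langle L \sim K\rangle$ is satisfied. The delicate family is~\eqref{eq:openpath1}, which is emitted lazily, once per open path discovered during search. I would argue that the target assignment satisfies every instance that could ever be produced: such an instance carries an antecedent $\bar S$ equal to the full selector set of the state that produced it, and under the exactly-$d$ (or at-most-$d$) accounting that state has $d$ clauses with $U$ a complete path through them. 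If $\bar S$ coincides with $M$, then $U$ is a complete path through $M$, which the spanning set closes, so the consequent holds; if $\bar S$ differs from $M$, then some selector of $\bar S$ is false in the target, the antecedent fails, and the implication is vacuously true.

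Finally, the existence of this model suffices, since the SAT solver is a complete decision procedure for the lazily-extended constraint set and must return some model, which by Theorem~\ref{thm:sound} is a genuine proof. The main obstacle I anticipate is precisely the treatment of the lazily-propagated open-path constraints~\eqref{eq:openpath1}: one must ensure that instances emitted for intermediate search states with a different selector set cannot accidentally exclude the intended model, and this is where the exactly-$d$ bookkeeping and the fact that the spanning set closes \emph{every} complete path through $M$ carry the argument. The reduction to a minimal, hence fully connected, matrix is a secondary point that must be reconciled with the literal reading $d = |M|$.
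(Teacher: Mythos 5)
Your proposal follows the same basic strategy as the paper's proof: construct the satisfying assignment explicitly from the witnessing proof (selectors for the clause copies, connection variables for the spanning set) and check, family by family, that no constraint excludes it. You are in fact more careful than the paper in the two places where care is genuinely needed. First, the paper simply asserts that the constructed assignment is consistent with ``all possible instances of~\eqref{eq:matrixConnect}''; this holds only when the spanning set fully connects $M$, which is exactly what your detour through minimality and Theorem~\ref{thm:connected} secures. (A spanning set need not connect every literal of a non-minimal $M$, and a literal with no dual anywhere in the input would make~\eqref{eq:matrixConnect} unsatisfiable for its clause, so this step is not cosmetic.) Second, your antecedent analysis of the lazily propagated constraints~\eqref{eq:openpath1} --- either $\bar{S}$ coincides with the target selector set, in which case $U$ is a complete path through $M$ and is closed by the spanning set, or some selector in $\bar{S}$ is false in the target and the implication is vacuous --- is precisely the justification hiding behind the paper's one-line remark that the final check of Section~\ref{sec:check} does not block the model, and you correctly identify the exactly-$d$ accounting as what makes this dichotomy exhaustive.

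The loose end you flag is, however, a real gap and should be closed rather than left as a remark. After replacing $M$ by a minimal spanning sub-matrix $M'$ you have only produced a model at depth $|M'| \le |M|$, whereas the statement promises one at depth exactly $|M|$ and the cardinality constraint counts selectors exactly. The repair is short: pad $M'$ back up to $|M|$ clauses by adding duplicate copies of clauses already in $M'$, connecting each literal of a duplicate to the same partners as the corresponding literal of the original and extending $\sigma$ to the renamed variables in the same way. Every path through the padded matrix restricts to a path through $M'$, so spanning is preserved, and full connectivity holds by construction, so your verification goes through verbatim at depth $|M|$. One further small point: you should note why the minimal sub-matrix still contains a start clause, so that~\eqref{eq:start:matrix} remains satisfiable; this relies on the paper's standing assumption that start clauses are chosen appropriately (e.g.\ all positive clauses, in which case any spanning matrix must contain one, since otherwise the path of all-negative literals would be open).
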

\begin{proof}
$M$ can be represented by setting $S^k_C$ true iff there are at least $k$ copies of $C$ in $M$.
The spanning set of connections is represented by setting $L \sim K$ iff $L$ is connected to $K$ in the proof.
This model of $\mathcal{E}_M$ and all its submodels are consistent modulo the semantics of $\sim$ and all possible instances of~\eqref{eq:matrixConnect}. Furthermore, the final model satisfies the depth constraints and contains at least one start clause.
We do not block the model with the final check in Section~\ref{sec:check}.
\qed
\end{proof}

\begin{theorem}[Complexity Bound]
\label{thm:complexity}
Solving our particular encoding $\mathcal{E}_M$ is in the complexity class $\Sigma_2^P$ with respect to both the size of the input and the size of the matrix proof.
\end{theorem}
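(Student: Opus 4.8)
The plan is to establish membership in $\Sigma_2^P$ by exhibiting the canonical existential–universal quantifier structure that characterises the class, namely that a problem is in $\Sigma_2^P$ iff it can be written as ``there exists a polynomial-size certificate such that for all polynomial-size challenges, a polynomial-time predicate holds''. Concretely, I would argue that the object being searched for -- a fully-connected matrix of size $d$ together with a candidate spanning set of connections -- is of size polynomial in the input and the matrix proof, and can therefore serve as the existential witness. This is justified by Theorem~\ref{thm:complete}, which tells us that a spanning matrix of size $d = |M|$ is represented by assignments to the $S_C^k$ selectors (at most $d$ copies per input clause) and to the connection variables $\langle L \sim K\rangle$, all of which are polynomially bounded.

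The key steps, in order, are as follows. First I would identify the existential witness as a complete assignment to the selector variables $S_C^k$ and the connection variables $\langle L \sim K\rangle$, together with the witnessing substitution $\sigma$ (or equivalently the unifier data justifying each asserted connection); I would check that each of these has size polynomial in the input and in $d$. Second, I would verify that the constraints~\eqref{eq:start:matrix} and~\eqref{eq:matrixConnect}, the depth limit, and the consistency of $\sigma$ with the asserted connections (via a unification check, which runs in polynomial time) can all be validated in deterministic polynomial time given the witness. Third -- and this is where the universal (co-NP) quantifier enters -- I would encode the spanning condition from Section~\ref{sec:check}: the matrix is spanning exactly when \emph{no} open path exists, i.e. for \emph{all} paths $U$ through the selected clauses, $U$ contains a connected pair. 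Checking a single candidate path $U$ against the asserted connections is polynomial, but a path picks one literal per selected clause, so there are exponentially many paths; thus the absence of an open path is naturally a co-NP predicate, supplying the $\forall$ layer. Composing the outer $\exists$ (guess the matrix and connections) with this inner $\forall$ (no open path) yields exactly the $\Sigma_2^P$ form $\exists\,\text{matrix}\,\forall\,\text{path}.\,P$.

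I expect the main obstacle to be arguing rigorously that the open-path check genuinely sits at the co-NP level and cannot be collapsed into the polynomial-time verification of the witness. One must be careful that the spanning condition is not itself verifiable in polynomial time given only the witness we guessed -- if it were, the whole problem would drop to NP. The point is precisely that the witness does \emph{not} include a proof that all paths are closed; such a proof would be a spanning-set certificate that might itself be exponential, so instead we rely on the universal quantifier to range over the exponentially many paths and a polynomial-time checker per path. A secondary subtlety is confirming the bound holds ``with respect to both the size of the input and the size of the matrix proof'', which I would handle by noting that $d$ is bounded by the proof size and that the number of distinct literals, clause copies, and connection variables are all polynomial in $d$ and the input, so the witness and each verification step remain polynomially bounded in both parameters simultaneously. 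I would close by invoking the standard equivalence between the quantified-predicate characterisation and the oracle-machine definition $\Sigma_2^P = \mathrm{NP}^{\mathrm{NP}}$ to conclude membership.
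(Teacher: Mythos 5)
Your proposal is correct and takes essentially the same route as the paper: guess the polynomially many selector and connection assignments (the paper counts at most $d \cdot c$ selectors and $O(d^2 l^2)$ connection variables), verify the constraints and unifier consistency in polynomial time, and place the spanning condition at the co-NP level since it quantifies over exponentially many paths. The only difference is presentational -- the paper phrases the co-NP part as an oracle call (a separate SAT solver checking the propositional status of $\sigma(M)$) while you phrase it as an explicit universal quantifier over paths -- and these coincide by the standard equivalence $\Sigma_2^P = \mathrm{NP}^{\mathrm{NP}}$ that you yourself invoke at the end.
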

\begin{proof}
There are polynomially-many SAT variables.
To see this, let $c$ be the number of clauses in the input, containing a total of $l$ literals.
We have at most $d\cdot c$ selectors $S^k_C$.
We also have $O(d^2l^2)$ possible connection literals $\langle L \sim K\rangle$.
Hence, there are only polynomially-many instantiations of~\eqref{eq:matrixConnect}. After adding in the worst case all of them, the problem is in NP. We can non-deterministically guess an assignment for all polynomially many selectors and unification atoms. Checking the model can be done clearly in deterministic polynomial time.
Checking whether the model represents a matrix with a spanning set of connections is in co-NP. It can be solved by a separate SAT solver, which checks if the matrix $\sigma(M)$ represented by the SAT model is satisfiability. As we can solve $\mathcal{E}_M$ in NP with a co-NP oracle, the problem of solving our encoding for some fixed limit $d$ is in $\Sigma_2^P$.
\qed
\end{proof}
As checking the satisfiability of a set of clauses over rigid variables is  $\Sigma_2^P$-complete~\cite{DBLP:conf/stacs/Goubault94},  the complexity of our approach coincides with this theoretical bound.
\begin{corollary}[Termination]
A run for solving $\mathcal{E}_M$ at fixed $d$ terminates.
\end{corollary}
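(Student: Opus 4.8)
The plan is to reduce termination to two finiteness observations, both of which are essentially already contained in the proof of Theorem~\ref{thm:complexity}. First I would observe that, for a fixed depth limit $d$, the entire pool of SAT variables that $\mathcal{E}_M$ can ever mention is finite: there are at most $d \cdot c$ selectors $S^k_C$ and $O(d^2 l^2)$ connection atoms $\langle L \sim K\rangle$, and crucially none of the propagation rules introduces a variable outside this pool. The start constraint~\eqref{eq:start:matrix} uses only selectors; each instance of~\eqref{eq:matrixConnect} uses only selectors and connection atoms $\langle L \sim K\rangle$ ranging over pairs of literals from the (at most $d$) copies of input clauses; the open-path constraint~\eqref{eq:openpath1} uses only connection atoms already present along the path; and the unification conflict clauses are built from connection atoms that have already been assigned. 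Write $V$ for this finite set of variables and $n = |V|$.

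Second, every constraint we ever add to the solver is a Boolean formula over $V$. Since there are only finitely many distinct clauses over $V$, the set of constraints ever added to the solver is finite, no matter how often the lazy rules~\eqref{eq:matrixConnect} and the unification checks fire. Consequently, after finitely many propagation events the solver is working on a fixed finite \textsc{cnf} over $V$, on which any standard CDCL decision procedure terminates; re-propagating an already-present clause is harmless and does not affect this.

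It then remains to bound the outer refinement loop, in which the solver repeatedly returns a candidate model that is checked as in Section~\ref{sec:check}. If the candidate has no open path we stop with a proof; otherwise we propagate a blocking constraint~\eqref{eq:openpath1} for the discovered open path $U$. By construction, none of the pairs $\{L, K\} \subseteq U$ is connected in the current model, so~\eqref{eq:openpath1} is falsified by that model and therefore excludes it from all future solutions. Each iteration of the loop thus removes at least one of the at most $2^n$ propositional assignments over $V$, so the loop can repeat at most $2^n$ times before the solver either reports unsatisfiability or returns a genuine spanning matrix.

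The main obstacle is ensuring that the interaction between user propagation and the solver's own backtracking cannot reintroduce fresh variables or fresh constraints indefinitely, which would invalidate the two finiteness arguments above; this is precisely what the first two observations rule out, since both the variable pool $V$ and the space of formulas over it become fixed and finite once $d$ is fixed. With that in place, the inner SAT search and the outer refinement loop both terminate, and hence so does the whole run of $\mathcal{E}_M$ at fixed $d$.
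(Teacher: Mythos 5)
Your proof is correct and follows essentially the same route as the paper, which states the corollary without explicit proof as an immediate consequence of the finiteness facts established in Theorem~\ref{thm:complexity} (polynomially many variables and constraint instances at fixed $d$). Your elaboration—fixed finite variable pool, finitely many propagated constraints, and the observation that each instance of~\eqref{eq:openpath1} excludes the current model so the refinement loop runs at most $2^n$ times—is exactly the argument the paper leaves implicit.
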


\section{Iterative Deepening via Unsat Core Refinement}
\label{sec:unsatCore}
A  downside of our  encoding $\mathcal{E}_M$, especially of its constraints from Section~\ref{sec:fully-connected}, is that we eagerly introduce and use selectors for clause instances that are not required.
If there is more than one input clause and the matrix is of size $d$, not all clauses can have $d$ copies in the matrix for arithmetic reasons.
Therefore, creating $d$ instances of each clause is overkill. This section addresses this challenge and improves iterative deepening via unsat cores, resulting in a refined encoding 
$\mathcal{E}_U$.  

We use an abstraction-refinement~\cite{cegar} approach to approximate the number of copies required for each clause.
This way, we avoid polluting the search space with likely-unnecessary clause instances.
Instead of a coarse global limit $d$, we estimate how many copies of each clause are required with a \emph{multiplicity} $\mu$~\cite{matings-in-matrices}. Initially we have $\mu(C) = 1$ for start clauses and $\mu(C) = 0$ otherwise.
The multiplicity is monotonically increased based on the unsat core of the following encoding.
We refine constraint~\eqref{eq:matrixConnect} to 
\begin{equation}
    \label{eq:matrixConnect2}
    S_C \Vdash~\bigvee_D~\bigvee_{1 \le k \le \bm{\mu(D) + 1}}~\bigvee_{K \in D^k} S^k_D \wedge \langle L \sim K\rangle
\end{equation}
as we  have $\mu(D)$ copies of $D$.
Note that $k$  ranges up to $\mu(D) + 1$.
We add temporary assertions\footnote{named $\kappa$ because it indicates that a clause needs more ``$\kappa$-city''
} $\kappa_D := \lnot S_D^{\mu(D) + 1}$ so that the solver cannot  select $D^{\mu(D) + 1}$, but it \emph{can} report that finding a proof failed in part due to a lack of copies of $D$.

We revise~\eqref{eq:openpath1}, as we can no longer assume that a fully connected matrix has exactly $d$ clauses. 
A candidate matrix can be fully connected, but the final proof may in fact have a matrix that is a \emph{superset} of the candidate.
As~\eqref{eq:openpath1} is now too strong, we weaken it to
\begin{equation}
\label{eq:openpath2}
\bar{S} \Vdash \left[\bigvee_{\{L, K\} \subseteq U} \langle L \sim K\rangle\right] \bm{\vee \bigvee_{L \in U} F_L}
\end{equation}
where $F_L$ is a formula indicating that $L$ could also be connected to another literal in a clause \emph{not yet in the matrix} and $\bar{S}$ a set of selectors as before in~\eqref{eq:openpath1}.
Whenever the SAT solver reports unsatisfiability, we retrieve the \emph{unsat core} representing a potentially non-minimal subset of $\kappa$ assertions sufficient to yield unsatisfiability.
We may increase one or more $\mu(C)$ if the corresponding assertion occurs in the unsat core.
However, to retain completeness we need to ensure that we eventually increment the multiplicity of every clause appearing repeatedly in the unsat core: in other words, we require \emph{fairness}.
In case the core is empty, we can conclude that no proof exists.
As a result, our SAT encoding $\mathcal{E}_U$ with improved iterative deepening is given by~\eqref{eq:start:matrix},~\eqref{eq:matrixConnect2}, and~\eqref{eq:openpath2}.
\begin{example}
\label{ex:infExt}
Consider the input problem
\begin{align*}
&C := P(a) &&D := \forall x.~\lnot P(x) \vee P(f(x)) &&E := \forall y.~\lnot P(y)
\end{align*}
with $C$ as start clause. $\kappa_{D}$ will always be contained within the unsat core, no matter its multiplicity. 
However, a fair enumeration eventually includes $\kappa_{E}$, and we find the obvious proof.
\end{example}
Our improved encoding $\mathcal{E}_U$ remains sound and terminating by similar arguments to Theorems~\ref{thm:sound}~and~\ref{thm:complexity}. Completeness requires an adjusted argument.
\begin{theorem}[Completeness]
If a matrix $M$ together with a spanning set of connections exist, there is a corresponding propositional model of $\mathcal{E}_U$.
\end{theorem}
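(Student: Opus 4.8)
The plan is to reduce the completeness of $\mathcal{E}_U$ to the completeness of $\mathcal{E}_M$ (Theorem~\ref{thm:complete}), using the fairness of the unsat-core-driven multiplicity increments to bridge the gap. Suppose a matrix $M$ together with a spanning set of connections exists. By Theorem~\ref{thm:complete}, at depth $d = |M|$ there is a propositional model of $\mathcal{E}_M$. The crucial observation is that this model is determined by which clause copies it selects, i.e. by a particular \emph{target multiplicity} $\mu^\star$, where $\mu^\star(C)$ is the number of copies of the input clause $C$ occurring in $M$. It therefore suffices to show that our fair enumeration eventually reaches a multiplicity function $\mu$ with $\mu \ge \mu^\star$ pointwise, since at that point~\eqref{eq:matrixConnect2} offers every connection that~\eqref{eq:matrixConnect} offered for $M$, and the $\kappa_C := \lnot S_C^{\mu(C)+1}$ assertions no longer block any copy used by $M$.

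The key steps, in order, would be as follows. First I would fix the witnessing matrix $M$ and read off $\mu^\star$ as above. Second, I would argue that while the current $\mu \not\ge \mu^\star$, the solver cannot yet produce the spanning model, and so each solving round either yields some other (correct) proof — in which case we are already done — or reports unsatisfiability with a non-empty unsat core of $\kappa$ assertions. Here the revised open-path constraint~\eqref{eq:openpath2} is essential: because it now includes the disjuncts $\bigvee_{L \in U} F_L$ allowing a literal to connect to a clause \emph{not yet in the matrix}, an open path no longer forces an immediate conflict but instead propagates the possibility that more copies are needed; consequently, when no proof of the current size exists, the derived inconsistency is attributable to some $\kappa_C$ with $\mu(C) < \mu^\star(C)$, and that $\kappa_C$ occurs in the core. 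Third, I would invoke fairness: since the enumeration guarantees that every clause appearing repeatedly in successive unsat cores eventually has its multiplicity incremented, and $\mu$ is increased monotonically, after finitely many rounds every $C$ with $\mu^\star(C) > \mu(C)$ has been incremented enough times that $\mu \ge \mu^\star$ pointwise. Finally, once $\mu \ge \mu^\star$, the argument of Theorem~\ref{thm:complete} applies verbatim to exhibit the spanning model as a model of $\mathcal{E}_U$, completing the proof.

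The main obstacle I anticipate is the second step: precisely justifying that the \emph{non-empty} unsat core must contain a $\kappa_C$ for some under-resourced clause $C$ rather than stalling on a core that fails to point at any clause needing more copies. This requires showing that whenever $\mu$ is strictly below $\mu^\star$ and the current encoding is unsatisfiable, the $F_L$ disjuncts of~\eqref{eq:openpath2} force the blame onto a missing copy — i.e. that the only way to close the offending open paths is to select a currently-forbidden copy $D^{\mu(D)+1}$, whose exclusion is exactly the assertion $\kappa_D$. Connecting this to Example~\ref{ex:infExt} is instructive: it shows that a single clause may recur in the core indefinitely, which is exactly why fairness across \emph{all} clauses in the core, rather than greedily incrementing the most frequent one, is needed. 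I would state fairness as a precise assumption on the enumeration and check that the monotone, eventually-complete incrementing it provides suffices to dominate the fixed finite target $\mu^\star$.

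\qed
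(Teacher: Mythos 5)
Your high-level plan coincides with the paper's own proof: reduce to Theorem~\ref{thm:complete}, argue that every failing solver round produces an unsat core blaming an under-resourced clause, and let fairness drive $\mu$ up to the target multiplicity $\mu^\star$. However, the step you yourself flag as ``the main obstacle'' --- that whenever $M$ is not representable at the current $\mu$, the unsat core must contain a $\kappa_C$ for a clause that genuinely needs more copies --- is precisely the entire technical content of the theorem, and you leave it as an anticipated difficulty rather than discharging it. Without that step the argument does not go through: a priori the core could be empty (so the procedure would wrongly report ``no proof''), or could consist only of $\kappa$ assertions for clauses irrelevant to $M$, in which case fairness increments the wrong multiplicities forever and $\mu \ge \mu^\star$ is never reached. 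So what you have is a correct proof skeleton with its load-bearing lemma missing.

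The paper closes exactly this gap with a device absent from your proposal: fix a \emph{maximal} subset $M' \subset M$ that is representable at the current $\mu$, and case-split on it. (1) If $M'$ cannot be fully connected, some literal $L \in M'$ has no connection inside $M'$; since $M$ can be fully connected and $M'$ is maximal, $L$ must connect to a copy of some clause $D$ outside $M'$. That option is offered by~\eqref{eq:matrixConnect2} but is blocked exactly by the assumption $\kappa_D$, which therefore appears in the core. (2) If $M'$ can be fully connected, the failure instead manifests as an open path $U$ and a propagated instance of~\eqref{eq:openpath2}; by maximality of $M'$ the only way to satisfy it is through the right-hand disjunct $F_L$, i.e.\ by selecting a forbidden copy, so again a $\kappa$ assumption is in the core. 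Your sketch gestures only at mechanism (2), the open-path constraint, and misses case (1) entirely, where the blame flows through the connectivity constraint~\eqref{eq:matrixConnect2} rather than through~\eqref{eq:openpath2}. Note also that it is the \emph{maximality} of $M'$ that makes both cases work --- it guarantees the only missing ingredient is a copy excluded by some $\kappa$ --- and nothing in your proposal plays that role.
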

\begin{proof}
In addition to Theorem~\ref{thm:complete}, we show that if there is a proof using $M$ which our current $\mu$ does not permit, at least one relevant $\kappa_C$ is contained in the unsat core.
Fairness then ensures we will eventually find the proof.
Consider a maximal subset $M' \subset M$ representable at $\mu$.
\begin{enumerate}[label=(\arabic*),wide=0em,leftmargin=0em]
\item If $M'$ cannot be fully connected, it contains at least one literal $L$ with no connections. Since $M'$ is maximal and $M$ can be fully connected, $L$ should be connected to some literal in a clause $D$ not yet in the matrix.
This option is offered in~\eqref{eq:matrixConnect2}, but fails
because the respective $\kappa_D$ assumption is forced false. $\kappa_D$ is therefore in the unsat core.
\item If $M'$ can be fully connected, we would have failed to close some open path $U$ and propagated some instance of~\eqref{eq:openpath2}.
Some $L \in U$ must connect to at least one literal of a clause not yet in $M'$ by the right disjunct of~\eqref{eq:openpath2}.
As $M'$ is maximal, we can add no clauses and so the constraint fails because of the $\kappa$ assumption.
\end{enumerate}
\qed
\end{proof}
A beneficial side effect of our SAT encoding $\mathcal{E}_U$ is it also terminates on some non-theorems.
In combination with techniques introduced in Section~\ref{sec:optimisations}, we obtain a decision procedure for the effectively-propositional fragment in Theorem~\ref{thm:epr}.

\section{Redundancy Elimination in SAT Solving}
\label{sec:optimisations}
When solving the SAT encodings of Sections~\ref{sec:tableau}--\ref{sec:unsatCore}, restricting the SAT solver's search space is beneficial. In addition to standard techniques,  such as tautology elimination~\cite{handbook-ar-model-elimination}, we propose some specialised redundancy eliminations.

\subsection{Multiplicity Symmetry}\label{sec:sb}
Our encodings from Sections~\ref{sec:tableau}--\ref{sec:unsatCore} contain \emph{several symmetries}~\cite{DBLP:journals/tc/AloulSM06}, which we now \emph{avoid}, rather than \emph{break}~\cite{symmetry-breaking-fmb}.
The first symmetry is that copies of clauses are interchangeable.
Suppose we select connect some literal $L$ to literal $K$ in a copy of $C$ not yet in the matrix, and subsequently fail to find a proof in that direction.
Nothing prevents the SAT solver selecting another so-far-unused copy of $C$ and failing for virtually the same reasons as before.
We avoid this by propagating
\begin{equation}
    \label{eq:clauseIndexOrder}
    S^{i + 1}_C \Vdash S^i_C, 
\end{equation}
enforcing that $C^i$ can be selected only if all $C^j$ with $j < i$ are selected, eliminating this symmetry.

\subsection{Subsumption and Instance Symmetry}
Saturation systems often delete a clause $C$ because it is \emph{subsumed}~\cite{term-indexing} by some more-general clause $D$.
Dynamics in connection systems are somewhat different as new first-order clauses are not deduced, but nonetheless we can profit by applying some amount of subsumption.
If two different clauses $C$ and $D$ are in the current matrix, we can enforce that neither becomes a subset of the other, modulo $\sigma$\footnote{note that we do not apply an additional substitution to either side}.
This restriction preserves completeness, by Bibel's Lemma 6.8~\cite{Bibel}.

An obvious extension of this idea is to remove clauses from the matrix that are subsumed by other clauses from the input set. This, however, fails.
\begin{example}
    Consider the four input clauses
\begin{align*}
&C := P(a) &&D := Q(a)\\
&E := \forall x.~\lnot P(x) \vee Q(x) &&F := \forall y.~\lnot Q(y)
\end{align*}
with $C$ the only start clause.
There is a proof without subsumption via $C$, $E$, and finally $F$, and in fact this is the only minimal proof using $C$. However, putting $E$ in the matrix with $\sigma(x) = a$ results in it being subsumed by $D$ from the input.
\end{example}
Subsumption in the usual sense of smaller clauses representing any usage of larger clauses fails. As we saw, this is because we might lose the \emph{reason to connect} a clause to our current matrix. Keeping larger clauses instead also does not work, as we might not be able to connect all literals of the larger clause.
Nonetheless, we can motivate additional symmetry avoidance this way.
Define an arbitrary total order $\prec$ on input clauses such that start clauses are the least elements. We assume that the order of each clause in the matrix is the same as the order of the clauses in the input set from which they are a copy.

\begin{lemma}[Instance Symmetry]
Suppose there is a matrix $M$ with a spanning set of connections containing a clause $D$ with $D \succ C$, and that there is a $\rho$ such that $\rho(C) = \sigma(D)$.
Then $M$ with $D$ exchanged for $C$ also has a spanning set of connections.
\end{lemma}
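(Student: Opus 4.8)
The plan is to exhibit an explicit global substitution $\sigma'$ for the exchanged matrix and then argue that it leaves no open path. Write $M_0 = M \setminus \{D\}$ and let $M' = M_0 \cup \{C\}$, where the copy of $C$ has its variables renamed apart from those of $M_0$, as usual. Because these variables are fresh, I can define $\sigma'$ by letting it agree with $\sigma$ on every variable occurring in $M_0$ and setting $\sigma'(x) = \rho(x)$ on the variables of $C$; the two partial maps have disjoint domains, so $\sigma'$ is a well-defined substitution. Its two defining properties are $\sigma'(L) = \sigma(L)$ for every $L \in M_0$, and $\sigma'(C) = \rho(C) = \sigma(D)$. In other words, the clause instance contributed by $C$ in $M'$ is literally the instance contributed by $D$ in $M$.

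Next I would show that $M'$ has a spanning set of connections by proving that no path through $M'$ is open under $\sigma'$. Take any path $p'$; it selects one literal $L_i$ from each clause of $M_0$ together with a literal $L_C$ from $C$. I map it to a path $p$ through $M$ that keeps the same $L_i$ and selects some $L_D \in D$ with $\sigma(L_D) = \sigma'(L_C)$; such a literal exists precisely because $\sigma'(L_C) \in \sigma'(C) = \sigma(D)$, so $\sigma'(L_C)$ already occurs as an instance of some literal of $D$. Since $M$ is spanned, $p$ contains a connected pair under $\sigma$. If that pair lies entirely within $M_0$, it is connected under $\sigma'$ as well, because $\sigma'$ and $\sigma$ coincide there, so $p'$ is closed. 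If the pair involves $L_D$, say $L_i$ is dual to $L_D$ under $\sigma$, then from $\sigma(L_D) = \sigma'(L_C)$ and $\sigma(L_i) = \sigma'(L_i)$ it follows that $L_i$ and $L_C$ are dual under $\sigma'$, so again $p'$ is closed. As $p'$ was arbitrary, $M'$ has no open path, hence a spanning set of connections.

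I expect the main obstacle to lie in the bookkeeping around $\sigma'$ rather than in the combinatorics. The equality $\rho(C) = \sigma(D)$ only guarantees a correspondence between the instance-literals of $C$ and of $D$, not a literal-to-literal syntactic bijection (clauses may instantiate with repeated or collapsing literals), so the argument must be carried out at the level of instances and paths; this is exactly why I route each $p'$ through a chosen preimage $L_D$ rather than attempting to rewrite the connections of the spanning set one by one. A second technical point is checking that combining $\sigma$ and $\rho$ yields a legitimate global substitution: assuming $\sigma$ is in normal form, $\sigma(D)$ contains no variable in the domain of $\sigma$, so $\rho$ restricted to the variables of $C$ introduces only variables fixed by $\sigma$, and the union is therefore consistent and may be taken idempotent. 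Finally, I would remark that the hypothesis $D \succ C$ is not used in the preservation of spanning itself; it serves only to orient the exchange, so that repeatedly replacing a clause by a $\prec$-smaller, instance-equivalent one terminates, which is what turns this observation into a sound symmetry-avoidance rule.
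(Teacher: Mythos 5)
Your proof is correct and follows essentially the same route as the paper's: you adapt $\sigma$ by $\rho$ on the fresh variables of $C$ (exactly the paper's ``adapt $\sigma$ according to $\rho$'', justified by the variables being renamed apart) and use $\rho(C) = \sigma(D)$ to conclude that every path through the exchanged matrix inherits a connected pair from the original. The paper's argument is terser; your explicit path-level case analysis and your care about instance-level (rather than literal-to-literal) equality simply spell out the details it leaves implicit.
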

\begin{proof}
As all variables in $C$ and $D$ are fresh, we can adapt $\sigma$ according to $\rho$. This way, $C$ may be connected to the same literals as $D$.
As $\rho(C)$ has the same literals as $\sigma(D)$, we neither add additional paths that must be closed, nor do we prevent other clauses connecting to $C$ because we dropped the respective literal.
\end{proof}
\begin{corollary}[Instance Symmetry Completeness]
Forbidding any such $D$ during search remains complete.
\end{corollary}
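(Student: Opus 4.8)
The plan is to show that every proof can be rewritten into one that avoids all forbidden clauses, so that restricting the search to non-forbidden clauses cannot lose a proof. Concretely, I would assume a matrix $M$ with a spanning set of connections exists, and then exhibit such a proof in which no clause is forbidden, where ``forbidden'' means exactly: a clause $D$ for which there is some $C \prec D$ and a $\rho$ with $\rho(C) = \sigma(D)$. Completeness of the restricted search follows, since the restricted search is allowed to find the rewritten proof.

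The key device is to iterate the Instance Symmetry Lemma. Whenever the current matrix contains a forbidden $D$, the hypotheses of the lemma are met verbatim, so I may exchange $D$ for the strictly smaller $C$ and obtain a matrix that still has a spanning set of connections. Thus forbidden clauses can be eliminated one at a time while preserving provability, and the task reduces to arguing that this rewriting process halts.

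To guarantee termination, I would attach to each matrix the finite multiset of the $\prec$-ranks of its clauses and compare matrices by the multiset extension of $\prec$. Since the input clause set is finite and $\prec$ is a total, hence well-founded, order, its multiset extension is well-founded. Each exchange replaces $D$ by a $\prec$-smaller $C$ while leaving the number of clauses unchanged, so the measure strictly decreases at every step; therefore only finitely many exchanges can occur. The terminal matrix $M^\star$ then still has a spanning set of connections and, by construction, contains no forbidden clause, establishing the corollary.

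The main obstacle is exactly this termination argument: showing that the exchanges cannot loop, and in particular that replacing $D$ by $C$ does not silently reintroduce a larger forbidden clause. This is what the strictly decreasing multiset measure rules out, so once the well-foundedness is set up the remainder is a routine well-founded induction. A secondary point worth checking explicitly is that, after adapting $\sigma$ as in the lemma, the instance $\sigma(C)$ coincides with the old $\sigma(D)$; this ensures that if $C$ is itself still forbidden, the next exchange again lowers the $\prec$-rank and keeps the measure decreasing rather than stalling.
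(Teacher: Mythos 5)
Your proposal is correct and follows essentially the same route the paper intends: the corollary is an immediate consequence of iterated application of the Instance Symmetry Lemma, exchanging each forbidden $D$ for the $\prec$-smaller $C$ until none remain. Your multiset-of-ranks termination argument is a sound and welcome elaboration of the step the paper leaves implicit (namely, that this rewriting process cannot loop), but it does not change the underlying approach.
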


\subsection{Substitution Symmetry}
A related symmetry appears within the substitution applied to different copies of the same clause.
\begin{example}
Consider a literal in two copies of the same clause, $L[x]$ and $L[y]$. 
Assume that all attempts with $\sigma(x) = a$ and $\sigma(y) = b$ fail.
Nothing prevents trying again with all connections ``flipped'' to the other clause and $\sigma(x) = b$ and $\sigma(y) = a$, introducing an exponential number of branches in the worst case.
\end{example}
We enforce an ordering on substitution of \emph{variables in copies of the same clause}.
This ordering of terms should be stable under substitution and orient as many terms as possible, but need not have the subterm property and therefore may not be a reduction ordering~\cite{term-rewriting-and-all-that}.
We suggest the following order.

Assume an arbitrary total ordering $\prec$ over function symbols.
Define $f(\bar{t}) \prec g(\bar{s})$ iff (i) $f \prec g$ or (ii) $f = g$ and $\bar{t} \prec \bar{s}$.
Sequences of terms $\bar{t} \prec \bar{s}$ are compared lexicographically.
Now, let $\bar{x}$ be the variables occurring left-to-right in clause $C$.
Given two instances $C^i$ and $C^j$ of the same clause with $i < j$, we may enforce that $\sigma(\bar{x}_i) \nsucceq \sigma(\bar{x}_j)$ to avoid symmetries over clause substitutions.
\begin{lemma}[Spanning Order]\label{lem:reorder}
Suppose $M$ has a spanning set of connections and contains two copies $C^i$ and $C^j$ of the same clause.
Then there is a spanning set of connections that satisfies $\sigma(\bar{x}_i) \nsucceq \sigma(\bar{x}_j)$.
\end{lemma}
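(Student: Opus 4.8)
The plan is to exploit the fact that two copies $C^i$ and $C^j$ of the same clause are completely interchangeable: their variables are renamed apart and they contain syntactically identical literals up to this renaming. First I would observe that the ordering $\prec$ defined on terms is a \emph{total} order that is stable under substitution, so for any closed instantiation exactly one of $\sigma(\bar{x}_i) \prec \sigma(\bar{x}_j)$, $\sigma(\bar{x}_i) = \sigma(\bar{x}_j)$, or $\sigma(\bar{x}_i) \succ \sigma(\bar{x}_j)$ holds. The desired condition $\sigma(\bar{x}_i) \nsucceq \sigma(\bar{x}_j)$ is exactly the first two cases, i.e. the negation of the strict-greater case.

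The core of the argument is a swap. Suppose the given spanning set of connections has $\sigma(\bar{x}_i) \succ \sigma(\bar{x}_j)$. I would construct a new substitution $\sigma'$ that exchanges the roles of the two copies: define $\sigma'$ to agree with $\sigma$ on all variables outside $C^i$ and $C^j$, and set $\sigma'(\bar{x}_i) := \sigma(\bar{x}_j)$ and $\sigma'(\bar{x}_j) := \sigma(\bar{x}_i)$. Correspondingly, I would rewrite the set of connections by interchanging every occurrence of a literal of $C^i$ with the matching literal of $C^j$ and vice versa. Because the two copies have identical literal structure (differing only in the variable names $\bar{x}_i$ versus $\bar{x}_j$), each connection $\langle L \sim K\rangle$ in the old proof maps to a connection between the corresponding literals under $\sigma'$, and the unification witnessing it is preserved since we have merely relabelled which copy carries which instantiation. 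Hence $\sigma'$ together with the permuted connections is again a valid spanning set of connections for $M$. After the swap we have $\sigma'(\bar{x}_i) = \sigma(\bar{x}_j) \prec \sigma(\bar{x}_i) = \sigma'(\bar{x}_j)$, so $\sigma'(\bar{x}_i) \nsucceq \sigma'(\bar{x}_j)$ as required. If instead the original proof already satisfies $\sigma(\bar{x}_i) \nsucceq \sigma(\bar{x}_j)$, there is nothing to do.

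The main obstacle I anticipate is making the ``interchange the connections'' step fully rigorous, in particular verifying that the spanning property is genuinely preserved rather than merely plausibly preserved. Concretely, I must check that every path through $M$ is still closed under $\sigma'$: a path selects one literal per clause, and I need to argue that the permutation of connection edges induced by the swap sends closed paths to closed paths bijectively. This reduces to the observation that the swap is induced by an automorphism of the matrix that fixes every clause except $C^i$ and $C^j$, which it transposes, so it permutes the set of all paths; since connectedness of a path is preserved by applying this automorphism together with $\sigma'$, no open path is created. A secondary subtlety worth spelling out is that if there are three or more copies of the same clause, a single transposition may not suffice to sort them; but it is enough to establish the two-copy statement as written, and the general sorted form then follows by iterating the transposition (a bubble-sort argument), which I would mention but not belabour.
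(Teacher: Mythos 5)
Your central swap argument coincides with the paper's proof: exchange the connections of $C^i$ and $C^j$, adjust $\sigma$ accordingly, and observe that the strict case $\sigma(\bar{x}_i) \succ \sigma(\bar{x}_j)$ becomes $\sigma'(\bar{x}_i) \prec \sigma'(\bar{x}_j)$. Your automorphism justification for why the spanning property survives the swap is a sound (and more detailed) elaboration of the paper's one-line ``exchange their connections''.

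There is, however, a genuine gap in your case analysis, caused by a misreading of the symbol $\nsucceq$. You treat $\sigma(\bar{x}_i) \nsucceq \sigma(\bar{x}_j)$ as ``not strictly greater'', so that the equality case $\sigma(\bar{x}_i) = \sigma(\bar{x}_j)$ counts as already satisfying the conclusion and ``there is nothing to do''. But $\nsucceq$ is the negation of $\succeq$, i.e.\ it excludes \emph{both} $\succ$ and $=$. In the equality case the required condition fails, and your swap cannot repair it: the situation is symmetric in $i$ and $j$, so after exchanging the two copies you still have $\sigma'(\bar{x}_i) = \sigma'(\bar{x}_j)$, hence still $\sigma'(\bar{x}_i) \succeq \sigma'(\bar{x}_j)$. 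This is precisely where the paper invokes a hypothesis you never use: duplicate clauses modulo $\sigma$ are already forbidden (by the subsumption constraint of the preceding subsection), so $\sigma(\bar{x}_i) = \sigma(\bar{x}_j)$ cannot occur and the only failure case is the strict one, which the swap then handles. Without that one observation the lemma, read literally, is false in the equality case, so your proof needs this additional step. A minor secondary point: the order is total only on ground terms, so your trichotomy need not hold when $\sigma$ leaves variables uninstantiated; this is harmless (incomparability satisfies $\nsucceq$), but worth stating correctly. Your closing remark about iterating transpositions to sort three or more copies matches the paper's remark following the lemma and is fine.
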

\begin{proof}
If this condition does not already hold, we have $\sigma(\bar{x}_i) \succeq \sigma(\bar{x}_j)$.
Duplicate clauses are already eliminated, so in fact $\sigma(\bar{x}_i) \succ \sigma(\bar{x}_j)$.
Now ``swap'' $C^i$ and $C^j$ by exchanging their connections to obtain a new spanning set of connections and consistent substitution $\sigma'$.
Necessarily, $\sigma'(\bar{x}_i) \prec \sigma'(\bar{x}_j)$.
\qed
\end{proof}
By iterated application of Lemma~\ref{lem:reorder}, it is possible to ``reorder'' any spanning set of connections into another that respects the order.
\begin{corollary}[Substitution Symmetry Completeness]
Enforcing an ordering on substitution of variables in copies of the same clause remains complete.
\end{corollary}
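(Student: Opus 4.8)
The plan is to derive this corollary from Lemma~\ref{lem:reorder} by upgrading its single-pair swap into a global sorting argument. Completeness here means that whenever some matrix $M$ with a spanning set of connections exists, the search with the extra ordering constraints can still reach a model; so it suffices to transform an arbitrary spanning set of connections for $M$ into one in which, for every pair of copies $C^i, C^j$ of the same clause with $i < j$, the condition $\sigma(\bar{x}_i) \nsucceq \sigma(\bar{x}_j)$ holds \emph{simultaneously}. The corollary is then immediate: such a transformed proof lies within the part of the search space that the refined encoding is permitted to explore.

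First I would partition the copies in $M$ into groups, one per input clause, noting that swapping two copies of one clause in the sense of Lemma~\ref{lem:reorder} (exchanging their connections) only permutes substitutions inside that group and leaves the substitutions of all other groups untouched. Hence it is enough to repair each group independently, and within a fixed group I can treat the problem purely as rearranging the tuples $\sigma(\bar{x}_i)$ attached to the copies. Because duplicate clauses are already eliminated (as used in Lemma~\ref{lem:reorder}), these tuples are pairwise distinct, and since $\prec$ is total on the relevant terms there is a unique increasing arrangement of them.

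Next I would use the swap of Lemma~\ref{lem:reorder} as the elementary operation of a sorting procedure. Whenever a pair $i<j$ violates the constraint, i.e. $\sigma(\bar{x}_i) \succ \sigma(\bar{x}_j)$, the lemma lets me exchange the two copies to obtain a new spanning set of connections with substitution $\sigma'$ satisfying $\sigma'(\bar{x}_i) \prec \sigma'(\bar{x}_j)$, while preserving spanning. Taking the number of out-of-order pairs (inversions with respect to $\prec$) as a termination measure, each such swap strictly decreases this measure, so the process terminates after finitely many steps; the terminal configuration has no inversions and therefore satisfies $\sigma(\bar{x}_i) \nsucceq \sigma(\bar{x}_j)$ for all $i<j$ within the group. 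Performing this for every group yields a spanning set of connections respecting all of the added constraints.

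The hard part will be justifying that iterating the pairwise swap genuinely converges to a globally sorted configuration, without a later swap reintroducing a violation in an already-corrected pair. The inversion-count argument handles this cleanly provided Lemma~\ref{lem:reorder} is read as applying to any pair of indices $i<j$ rather than only adjacent ones, so I would make sure the swap operation is well-defined for arbitrary pairs and that exchanging the connections of two copies does not affect the consistency of $\sigma$ on the rest of the matrix --- which is exactly the content already established in the proof of Lemma~\ref{lem:reorder}. A minor point to verify is that the stability of $\prec$ under substitution makes the comparison of $\sigma(\bar{x}_i)$ and $\sigma(\bar{x}_j)$ behave consistently as $\sigma$ is updated to $\sigma'$, so that the decreasing-inversions measure is genuinely well-founded.
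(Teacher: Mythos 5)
Your proposal is correct and follows essentially the same route as the paper: the paper derives the corollary by noting that iterated application of Lemma~\ref{lem:reorder} can ``reorder'' any spanning set of connections into one respecting the order. Your contribution is merely to make the iteration explicit --- per-clause grouping and the decreasing inversion-count termination measure --- which is a sound (and welcome) elaboration of the argument the paper leaves implicit.
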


\subsection{Relating Unification, Constraints, and the Herbrand Universe}
\label{sec:herbrand}
Classically, connection systems maintain a substitution $\sigma$ and periodically check a set of constraints individually, backtracking if any constraint fails~\cite{handbook-ar-model-elimination}.
While efficient, this approach does not take into account \emph{mutually} unsatisfiable constraints, or the Herbrand universe.
For example, given there are only two constants $a, b$ in the universe, the set of constraints $x \neq a, x \neq b$ are individually satisfiable, but not together.
Ordering constraints also produce this effect: consider $x \prec y, y \prec z, z \prec x$.
Or, consider the universe generated by a constant $a$ and a unary function $f \succ a$.
Here, $x \prec a$ is unsatisfiable and $x \prec f(a)$ implies $x = a$.
There is a tradeoff between the pruning effect of such interrelated constraints and the computation required to enforce them, which must be considered for any future practical implementation.

\subsection{Clause Splitting}
\emph{Clause splitting} is a powerful technique in saturation-based theorem proving, decomposing clauses $C$ into variable-disjoint \emph{components} $C_1 \vee \ldots \vee C_n$ and dispatching them separately~\cite{superposition-sorts-splitting}.
Splitting is also of interest in analytic tableaux~\cite{method-of-variable-splitting} and the connection method~\cite{Bibel}.
We propose a splitting method specialised to our setting based on the AVATAR framework~\cite{AVATAR}.
AVATAR introduces a SAT variable $\alpha_i$ for each component $C_i$ of a clause and adds their disjunction to a SAT solver.
If this solver yields a model assigning $\alpha_i$ true, the component $C_i$ may be used as if it were in the input set.
When a first-order refutation is found, AVATAR adds a clause blocking the combination of $\alpha$ variables whose components were used in the refutation.
This process is repeated until the set of constraints becomes unsatisfiable.

Integrating clause splitting into connection systems requires special attention. 
Clauses in the matrix may become splittable at some point modulo $\sigma$, but on backtracking are no longer.
We therefore observe and record all splittable clause instances generated at some point during a run, but this requires a restart.

\begin{example}[Necessity of Restarts]
In all our encodings,
the set of possible connections must be known in advance. Adding components to the input -- and therefore possible connections -- after we have already propagated SAT clauses does not work properly.
Assume we already propagated an instance of~\eqref{eq:matrixConnect} and we later add a new component to the clause set that contains a possible connection. Adding~\eqref{eq:matrixConnect} again results in a strictly weaker and thus redundant constraint.
\end{example}
This is not a major problem as we can add new components each time we start the SAT solving process afresh, such as when the resource limit is increased after finding no proof.
However, adding components to the input without also excluding appropriate instances of the parent clause introduces duplication into search, and excluding parent clauses runs into more trouble.
\begin{example}[Connection Problems]
Consider clauses $P(x)$, $\lnot P(a) \vee Q(a)$, and $\lnot Q(x)$. Suppose $P(x)$ is the start clause, and we split the binary clause and remove it. Finding a sub-proof for the remaining input with $\lnot P(a)$ is straightforward, but for $Q(a)$ we need the original binary clause to make the connection.
\end{example}
This is why we suggest a different approach that does not add components explicitly, but instead ``relaxes'' some literals in a clause.
Let $C$ be a clause (modulo $\sigma$ in general) such that $C$ is splittable into variable-disjoint components $C_i$.
Define the \emph{active literals in $C$} to be the union of all $C_i$ such that $\alpha_i$ is assigned true in the AVATAR model.
If $C$ is instead not splittable, all its literals are defined to be active.
Note that a literal can be active in one clause and inactive in another, even if they are copies of the same input clause.
For the sake of simplicity, assume encoding $\mathcal{E}_M$.
We now relax constraints on literals that are not active, so that~\eqref{eq:matrixConnect} becomes
\begin{equation}
\label{eq:avatarConnect}
S_C, \bm{A^C_L} \Vdash~\bigvee_D~\bigvee_{1 \le k \le d}\bigvee_{K \in D^k} S^k_D \wedge (\bm{A^D_K \Rightarrow} \langle L \sim K\rangle).
\end{equation}
where $A^C_L$ is an SAT variable expressing that literal $L$ is active in clause $C$.
The assignment of such variables can be checked internally with respect to $\sigma$ and the AVATAR model, in a similar way to unification constraints.

We also relax the definition of spanning set of connections to ignore open paths if said path contains inactive literals.
In case we find such a spanning set of connections, we block the corresponding set of AVATAR variables with a conflict clause over $\alpha_i$ and obtain a new model.
Note that~\eqref{eq:avatarConnect} only requires that active literals need to be connected. This avoids the previously discussed problem that we are not able to connect to some clauses because AVATAR selected only parts of it. We still add clauses containing literals that \emph{could} be connected, but we are not required to actually perform connections to inactive literals.

The relaxed encoding remains sound as the resulting set of connections has no open paths through active literals, i.e. it is spanning for some matrix of input clauses and components assigned true in the AVATAR model.
Completeness can be obtained immediately by noticing that the encoding is strictly weaker than $\mathcal{E}_M$ for any given AVATAR model.
When the space of AVATAR models is eventually exhausted, a proof can be given consisting of multiple matrix sub-proofs.

\section{Deciding Bernays-Sch\"onfinkel}
\label{sec:epr}
Assume that we have at least the improved iterative deepening technique of Section~\ref{sec:unsatCore} querying unsat cores, disallow duplicate clauses, and have a sufficiently-powerful implementation of Section~\ref{sec:herbrand} to reason about finite Herbrand universes.
Then, search for solutions to the encoding becomes a decision procedure for the effectively propositional fragment (EPR)~\cite{epr}.

\begin{theorem}[EPR decidability]
\label{thm:epr}
Assuming effectively propositional input, proof search over $\mathcal{E}_U$ terminates.
\end{theorem}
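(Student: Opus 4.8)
The plan is to exploit the defining feature of the effectively propositional fragment: the absence of function symbols. First I would observe that, after Skolemization and CNF transformation, every EPR clause contains only variables and constants, so the Herbrand universe $H$ is \emph{finite} -- it is the set of constants occurring in the input, together with a single fresh constant if none occurs. Consequently each input clause $C$ with $v$ variables has at most $|H|^{v}$ distinct instances over $H$, and the set of all such instances of all input clauses is finite. This finiteness is what ultimately forces the iterative-deepening loop of $\mathcal{E}_U$ to stop.

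The central claim is that the multiplicity $\mu(C)$ assigned to any clause $C$ during the refinement loop stays bounded. Because we assume duplicate clauses are disallowed, and because the machinery of Section~\ref{sec:herbrand} restricts every binding $\sigma(x)$ to the finite universe $H$, any two copies $C^i, C^j$ that would receive the same instance modulo $\sigma$ are forbidden. Hence the number of simultaneously useful copies of $C$, and therefore the multiplicity we ever need to introduce, is bounded by the number of distinct $H$-instances of $C$, i.e.\ by $|H|^{v}$. I would make this precise by showing that once $\mu(C)$ reaches this bound, every admissible copy of $C$ already occurs, so no instance of~\eqref{eq:matrixConnect2} or~\eqref{eq:openpath2} can genuinely demand a \emph{new} copy of $C$; the assertion $\kappa_C := \lnot S_C^{\mu(C)+1}$ then ceases to be usable in deriving inconsistency and drops out of the unsat core.

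Termination then follows from two facts already available. At any fixed multiplicity vector $\mu$, solving $\mathcal{E}_U$ terminates by the termination argument underlying Theorem~\ref{thm:complexity}, adapted to $\mathcal{E}_U$ as noted in Section~\ref{sec:unsatCore}. The outer loop increases each $\mu(C)$ monotonically but, by the bound above, can never push any of them past $|H|^{v}$; by fairness every clause repeatedly appearing in the core is eventually incremented, so after finitely many refinements every multiplicity is saturated. At that point either the solver has produced a spanning matrix -- a proof -- or, having exhausted all distinct $H$-instances without being able to close every open path, the returned core contains no $\kappa_C$ whose increment is admissible. By the completeness theorem for $\mathcal{E}_U$ together with Herbrand's theorem, this certifies that the input is satisfiable, and search halts reporting that no proof exists.

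The main obstacle is the boundedness claim of the second paragraph. The delicate point is not merely that each individual disequation or ordering constraint over the variables of a new copy is satisfiable in isolation, but that a freshly demanded copy of $C$ must necessarily coincide, modulo $\sigma$ over the finite universe $H$, with an instance already present -- so that incrementing $\mu(C)$ yields no new spanning possibility and $\kappa_C$ can no longer participate in the core. Establishing this is exactly where the hypothesis of a ``sufficiently-powerful'' implementation of Section~\ref{sec:herbrand}, capable of detecting \emph{mutually} unsatisfiable constraints relative to a finite Herbrand universe rather than checking constraints individually, is essential; tying the unsat-core behaviour to this finite-universe reasoning is the heart of the argument.
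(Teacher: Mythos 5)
Your proposal follows essentially the same route as the paper's own proof: dispatch theorems by completeness, and for non-theorems bound each clause's multiplicity by the number $c^v$ of instances over the finite Herbrand universe, so that once this bound is reached the distinctness of clause copies forces $\kappa_C$ out of any minimal unsat core, the core eventually becomes empty, and search halts. The only notable difference is cosmetic: the paper makes the pigeonhole conflict explicit via the selector-ordering constraint~\eqref{eq:clauseIndexOrder} together with copy distinctness, whereas you phrase it via interchangeability of copies and the finite-universe reasoning of Section~\ref{sec:herbrand}; the substance is identical.
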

\begin{proof}
By definition, all symbols in the input clauses are constants.
If the input is a theorem, the procedure terminates by completeness.
Therefore, suppose the input is not a theorem, and so each run of the solver terminates with unsatisfiability.
It suffices to show that the unsatisfiable core of Section~\ref{sec:unsatCore} will eventually become empty, indicating that the input is not a theorem.

Let $c$ be the number of constants in the Herbrand universe.
There are at most $c^v$ possible instantiations of a clause $C$, where $v$ is the number of variables in $C$.
Assume the limit $\mu(C)$ of this clause has reached $c^v + 1$.
Choosing all available selector variables $S^k_C$ would conflict either with constraint~\eqref{eq:clauseIndexOrder} or with the requirement that all clause copies are distinct.
$\kappa_C$ will therefore not appear in a minimal unsat core, as it can be shown false independently of the assumption.
Consequently, $\mu(C)$ will not increase further.
Every clause will eventually reach their limit and will not occur in the unsat core from that point onwards, and eventually the core becomes empty.
\qed
\end{proof}

\section{Related Work}
First-order theorem provers employ a variety of ground reasoning techniques, predominantly SAT and SMT solvers.
Here we must mention the family of instance-based methods~\cite{instance-based-methods}: grounding a set of first-order clauses in the hope that they become unsatisfiable, which can be employed with a dedicated calculus~\cite{InstGen} or alongside an existing system~\cite{SATCoP,e-grounding}.
In the other direction, SMT solvers often integrate quantifier instantiation into satisfiability routines~\cite{mbqi,e-matching}.
Ground reasoning can also be used for many other combinatorial tasks in first-order theorem provers~\cite{uses-of-sat-in-vampire}, such as keeping track of clause splitting~\cite{AVATAR}, detecting subsumption~\cite{sat-subsumption}, or determining when inferences are applicable~\cite{sat-subsumption-resolution}.
MACE-style finite model builders~\cite{mace} employ SAT solving to determine whether a set of clauses is satisfiable assuming a finite model of fixed size, and symmetry-breaking can also be applied~\cite{symmetry-breaking-fmb}.

Restricting ourselves now to directly encoding proof objects, the ChewTPTP system in both its SAT~\cite{chewtptp-sat} and SMT~\cite{chewtptp-smt} incarnations is the closest existing approach to theorem proving via satisfiability.
ChewTPTP encodes constraints for a closed connection tableau completely ahead of time, then passes the resulting constraints to a SAT or SMT solver.
We have ourselves previously published an early version of our ideas in a more general setting~\cite{upCoP}.

\section{Conclusion}
We  encode first-order connection calculus as a propositional problem. We improve our SAT encodings for  matrix forms and by guiding iterative deepening using unsat cores. Furthermore, we discuss several optimizations to prune symmetries and eliminate unnecessary branches. Implementation and practical experimentation with our SAT-based approach is left for future work.

\subsubsection{Acknowledgements.} {We acknowledge funding from the ERC Consolidator Grant ARTIST 101002685, the TU Wien SecInt Doctoral College, the FWF SFB project SpyCoDe F8504, and the WWTF ICT22-007 grant ForSmart.}
\bibliographystyle{splncs04}
\bibliography{main}

\begin{thebibliography}{10}
\providecommand{\url}[1]{\texttt{#1}}
\providecommand{\urlprefix}{URL }
\providecommand{\doi}[1]{https://doi.org/#1}

\bibitem{DBLP:journals/tc/AloulSM06}
Aloul, F.A., Sakallah, K.A., Markov, I.L.: Efficient symmetry breaking for {Boolean} satisfiability. {IEEE} Trans. Computers  \textbf{55}(5),  549--558 (2006). \doi{10.1109/TC.2006.75}

\bibitem{method-of-variable-splitting}
Antonsen, R.: The Method of Variable Splitting. Ph.D. thesis, Faculty of Mathematics and Natural Sciences, University of Oslo (2008)

\bibitem{term-rewriting-and-all-that}
Baader, F., Nipkow, T.: Term rewriting and all that (1998)

\bibitem{handbook-ar-nf}
Baaz, M., Egly, U., Leitsch, A.: Normal form transformations. In: Handbook of Automated Reasoning (in 2 volumes), pp. 273--333 (2001). \doi{10.1016/B978-044450813-3/50007-2}

\bibitem{DBLP:conf/cp/BailleuxB03}
Bailleux, O., Boufkhad, Y.: Efficient {CNF} encoding of {Boolean} cardinality constraints. In: {CP}. {LNCS}, vol.~2833, pp. 108--122 (2003). \doi{10.1007/978-3-540-45193-8\_8}

\bibitem{datatypes}
Barrett, C.W., Shikanian, I., Tinelli, C.: An abstract decision procedure for a theory of inductive data types. J. Satisf. Boolean Model. Comput.  \textbf{3}(1-2),  21--46 (2007). \doi{10.3233/SAT190028}

\bibitem{instance-based-methods}
Baumgartner, P., Thorstensen, E.: Instance based methods -- {A} brief overview. K{\"{u}}nstliche Intell.  \textbf{24}(1),  35--42 (2010). \doi{10.1007/S13218-010-0002-X}

\bibitem{epr}
Bernays, P., Sch{\"o}nfinkel, M.: {Zum Entscheidungsproblem der mathematischen Logik}. Mathematische Annalen  \textbf{99},  342--372 (1928). \doi{10.1007/BF01459101}

\bibitem{matings-in-matrices}
Bibel, W.: Matings in matrices. Commun. {ACM}  \textbf{26}(11),  844--852 (1983). \doi{10.1145/182.183}

\bibitem{Bibel}
Bibel, W.: Automated theorem proving. Artificial intelligence, Vieweg, 2., rev. ed. edn. (1987)

\bibitem{comparison-of-proof-methods}
Bibel, W.: Comparison of proof methods. In: {AReCCa}. pp. 119--132 (2023), \url{https://ceur-ws.org/Vol-3613/}

\bibitem{DBLP:conf/sat/BiereFW23}
Biere, A., Froleyks, N., Wang, W.: {CadiBack: Extracting Backbones with CaDiCaL}. In: SAT. LIPIcs, vol.~271, pp. 3:1--3:12 (2023). \doi{10.4230/LIPICS.SAT.2023.3}

\bibitem{handbook-of-satisfiability}
Biere, A., Heule, M., van Maaren, H., Walsh, T. (eds.): Handbook of Satisfiability - Second Edition, Frontiers in Artificial Intelligence and Applications, vol.~336 (2021). \doi{10.3233/FAIA336}

\bibitem{user-propagation}
Bj{\o}rner, N.S., Eisenhofer, C., Kov{\'{a}}cs, L.: Satisfiability modulo custom theories in {Z3}. In: {VMCAI}. {LNCS}, vol. 13881, pp. 91--105 (2023). \doi{10.1007/978-3-031-24950-1\_5}

\bibitem{taxonomy}
Bonacina, M.P.: A taxonomy of theorem-proving strategies. In: Artificial Intelligence Today: Recent Trends and Developments, {LNCS}, vol.~1600, pp. 43--84 (1999). \doi{10.1007/3-540-48317-9\_3}

\bibitem{chewtptp-smt}
Bongio, J., Katrak, C., Lin, H., Lynch, C., McGregor, R.E.: Encoding first order proofs in {SMT}. In: {SMT}. Electronic Notes in Theoretical Computer Science, vol.~198, pp. 71--84 (2007). \doi{10.1016/J.ENTCS.2008.04.081}

\bibitem{brand}
Brand, D.: Proving theorems with the modification method. {SIAM} J. Comput.  \textbf{4}(4),  412--430 (1975). \doi{10.1137/0204036}

\bibitem{DBLP:journals/tcad/ChaiK05}
Chai, D., Kuehlmann, A.: A fast pseudo-{Boolean} constraint solver. {IEEE} Trans. Comput. Aided Des. Integr. Circuits Syst.  \textbf{24}(3),  305--317 (2005). \doi{10.1109/TCAD.2004.842808}

\bibitem{minimal-unsat-cores-smt}
Cimatti, A., Griggio, A., Sebastiani, R.: Computing small unsatisfiable cores in satisfiability modulo theories. J. Artif. Intell. Res.  \textbf{40},  701--728 (2011). \doi{10.1613/JAIR.3196}

\bibitem{mace}
Claessen, K., S{\"o}rensson, N.: New techniques that improve {MACE}-style finite model finding. In: Proceedings of the CADE-19 Workshop: Model Computation-Principles, Algorithms, Applications. pp. 11--27 (2003)

\bibitem{cegar}
Clarke, E.M., Grumberg, O., Jha, S., Lu, Y., Veith, H.: Counterexample-guided abstraction refinement for symbolic model checking. J. {ACM}  \textbf{50}(5),  752--794 (2003). \doi{10.1145/876638.876643}

\bibitem{sat-subsumption-resolution}
Coutelier, R., Kov{\'{a}}cs, L., Rawson, M., Rath, J.: {SAT}-based subsumption resolution. In: {CADE}. {LNCS}, vol. 14132, pp. 190--206 (2023). \doi{10.1007/978-3-031-38499-8\_11}

\bibitem{tableauxHandbook}
D'Agostino, M., Gabbay, D.M., H{\"a}hnle, R., Posegga, J.: Handbook of tableau methods (2013)

\bibitem{chewtptp-sat}
Deshane, T., Hu, W., Jablonski, P., Lin, H., Lynch, C., McGregor, R.E.: Encoding first order proofs in {SAT}. In: {CADE}. {LNCS}, vol.~4603, pp. 476--491 (2007). \doi{10.1007/978-3-540-73595-3\_35}

\bibitem{upCoP}
Eisenhofer, C., Kov{\'a}cs, L., Rawson, M.: Embedding the connection calculus in satisfiability modulo theories. In: {AReCCa}. pp. 54--63 (2023), \url{https://ceur-ws.org/Vol-3613/}

\bibitem{ipasir-up}
Fazekas, K., Niemetz, A., Preiner, M., Kirchweger, M., Szeider, S., Biere, A.: {IPASIR-UP:} user propagators for {CDCL}. In: {SAT}. LIPIcs, vol.~271, pp. 8:1--8:13 (2023). \doi{10.4230/LIPICS.SAT.2023.8}

\bibitem{mbqi}
Ge, Y., de~Moura, L.M.: Complete instantiation for quantified formulas in satisfiabiliby modulo theories. In: {CAV}. {LNCS}, vol.~5643, pp. 306--320 (2009). \doi{10.1007/978-3-642-02658-4\_25}

\bibitem{DBLP:conf/stacs/Goubault94}
Goubault, J.: The complexity of resource-bounded first-order classical logic. In: {STACS}. pp. 59--70. {LNCS} (1994). \doi{10.1007/3-540-57785-8\_131}

\bibitem{InstGen}
Korovin, K.: {Inst-Gen} - {A} modular approach to instantiation-based automated reasoning. In: Voronkov, A., Weidenbach, C. (eds.) Programming Logics - Essays in Memory of Harald Ganzinger. {LNCS}, vol.~7797, pp. 239--270 (2013). \doi{10.1007/978-3-642-37651-1\_10}

\bibitem{Vampire}
Kov{\'{a}}cs, L., Voronkov, A.: First-order theorem proving and \textsc{Vampire}. In: {CAV}. pp. 1--35. {LNCS} (2013). \doi{10.1007/978-3-642-39799-8\_1}

\bibitem{matrix-based-constructive}
Kreitz, C., Otten, J., Schmitt, S., Pientka, B.: Matrix-based constructive theorem proving. In: Intellectics and Computational Logic (to {Wolfgang Bibel} on the occasion of his 60th birthday). Applied Logic Series, vol.~19, pp. 189--205 (2000)

\bibitem{using-matings-for-pruning}
Letz, R.: Using matings for pruning connection tableaux. In: {CADE}. {LNCS}, vol.~1421, pp. 381--396 (1998). \doi{10.1007/BFB0054273}

\bibitem{SETHEO}
Letz, R., Schumann, J., Bayerl, S., Bibel, W.: {SETHEO:} {A} high-performance theorem prover. J. Autom. Reason.  \textbf{8}(2),  183--212 (1992). \doi{10.1007/BF00244282}

\bibitem{handbook-ar-model-elimination}
Letz, R., Stenz, G.: Model elimination and connection tableau procedures. In: Handbook of Automated Reasoning (in 2 volumes), pp. 2015--2114 (2001). \doi{10.1016/B978-044450813-3/50030-8}

\bibitem{minimal-unsat-cores}
Lynce, I., Marques{-}Silva, J.: On computing minimum unsatisfiable cores. In: {SAT} (2004), \url{http://www.satisfiability.org/SAT04/programme/110.pdf}

\bibitem{e-matching}
de~Moura, L.M., Bj{\o}rner, N.S.: Efficient e-matching for {SMT} solvers. In: {CADE}. {LNCS}, vol.~4603, pp. 183--198 (2007). \doi{10.1007/978-3-540-73595-3\_13}

\bibitem{handbook-ar-paramodulation}
Nieuwenhuis, R., Rubio, A.: Paramodulation-based theorem proving. In: Handbook of Automated Reasoning (in 2 volumes), pp. 371--443 (2001). \doi{10.1016/B978-044450813-3/50009-6}

\bibitem{leanCoP}
Otten, J.: \textsf{leanCoP} 2.0 and \textsf{ileanCoP} 1.2: High performance lean theorem proving in classical and intuitionistic logic (system descriptions). In: {IJCAR}. pp. 283--291. {LNCS} (2008). \doi{10.1007/978-3-540-71070-7\_23}

\bibitem{term-indexing}
Ramakrishnan, I.V., Sekar, R., Voronkov, A.: {Term Indexing}. In: Handbook of Automated Reasoning (in 2 volumes), pp. 1853--1964. Elsevier and {MIT} Press (2001). \doi{10.1016/b978-044450813-3/50028-x}

\bibitem{sat-subsumption}
Rath, J., Biere, A., Kov{\'{a}}cs, L.: First-order subsumption via {SAT} solving. In: {FMCAD}. pp. 160--169 (2022). \doi{10.34727/2022/ISBN.978-3-85448-053-2\_22}

\bibitem{SATCoP}
Rawson, M., Reger, G.: Eliminating models during model elimination. In: {TABLEAUX}. {LNCS}, vol. 12842, pp. 250--265 (2021). \doi{10.1007/978-3-030-86059-2\_15}

\bibitem{symmetry-breaking-fmb}
Reger, G., Riener, M., Suda, M.: Symmetry avoidance in {MACE}-style finite model finding. In: {FroCoS}. {LNCS}, vol. 11715, pp. 3--21 (2019). \doi{10.1007/978-3-030-29007-8\_1}

\bibitem{uses-of-sat-in-vampire}
Reger, G., Suda, M.: The uses of {SAT} solvers in \textsc{Vampire}. In: Kov{\'{a}}cs, L., Voronkov, A. (eds.) {Vampire}. EPiC Series in Computing, vol.~38, pp. 63--69 (2015). \doi{10.29007/4W68}

\bibitem{e-grounding}
Schulz, S.: Light-weight integration of {SAT} solving into first-order reasoners -- first experiments. {Vampire} pp. 9--19 (2017)

\bibitem{eprover}
Schulz, S., Cruanes, S., Vukmirovic, P.: Faster, higher, stronger: {E} 2.3. In: {CADE}. pp. 495--507. {LNCS} (2019). \doi{10.1007/978-3-030-29436-6\_29}

\bibitem{DBLP:conf/cp/Sinz05}
Sinz, C.: Towards an optimal {CNF} encoding of {Boolean} cardinality constraints. In: {CP}. {LNCS}, vol.~3709, pp. 827--831 (2005). \doi{10.1007/11564751\_73}

\bibitem{smullyan}
Smullyan, R.M.: First-Order Logic (1968)

\bibitem{DBLP:conf/jelia/Tinelli02}
Tinelli, C.: {A DPLL-Based Calculus for Ground Satisfiability Modulo Theories}. In: {JELIA}. pp. 308--319 (2002). \doi{10.1007/3-540-45757-7\_26}

\bibitem{AVATAR}
Voronkov, A.: {AVATAR:} the architecture for first-order theorem provers. In: {CAV}. {LNCS}, vol.~8559, pp. 696--710 (2014). \doi{10.1007/978-3-319-08867-9\_46}

\bibitem{superposition-sorts-splitting}
Weidenbach, C.: Combining superposition, sorts and splitting. In: Robinson, J.A., Voronkov, A. (eds.) Handbook of Automated Reasoning (in 2 volumes), pp. 1965--2013 (2001). \doi{10.1016/B978-044450813-3/50029-1}

\bibitem{spass}
Weidenbach, C., Dimova, D., Fietzke, A., Kumar, R., Suda, M., Wischnewski, P.: {SPASS} version 3.5. In: {CADE}. {LNCS}, vol.~5663, pp. 140--145 (2009). \doi{10.1007/978-3-642-02959-2\_10}

\end{thebibliography}

\end{document}